%% file: main.tex
\pdfoutput=1
\documentclass[acmsmall]{acmart}
\usepackage{graphicx} % Required for inserting images
\usepackage{xspace} % For \xspace in macros
\usepackage{amsmath}
\usepackage{makecell}
\usepackage{algorithm}
\usepackage{algpseudocode}
\usepackage{amsthm}
\usepackage{mathpartir}
\usepackage[all]{xy}
\usepackage{multirow}
\usepackage{caption}
\usepackage{graphicx}
\usepackage[most]{tcolorbox}
\tcbuselibrary{skins,breakable}
\usepackage{tikz}
\usepackage{qcircuit}
\usepackage{braket} % \bra, \ket, \braket, \ketbra, \braKet
\usepackage{xcolor}
\usepackage{array,booktabs,tabularx}
\usepackage{wrapfig}
\usepackage{tcolorbox}
\newcolumntype{L}{>{\raggedright\arraybackslash}X}
\usetikzlibrary{calc}

\numberwithin{equation}{section}
\theoremstyle{plain}
\newtheorem{theorem}{Theorem}[section]
\newtheorem{lemma}[theorem]{Lemma}

\setcopyright{cc}
\setcctype{by}
\acmDOI{10.1145/3839450}
\acmYear{2026}
\acmJournal{PACMPL}
\acmVolume{10}
\acmNumber{OOPSLA2}
\acmArticle{318}
\acmMonth{10}
\acmSubmissionID{oopslab26main-p56-p}
\received{2026-03-17}
\received[accepted]{2026-08-06}

\theoremstyle{definition}
\newtheorem{definition}[theorem]{Definition}
\newtheorem{example}[theorem]{Example}

\theoremstyle{remark}
\newtheorem{remark}[theorem]{Remark}
\usepackage{braket}    

\begin{document}
\title{Synthesis of Compact and Expressive Quantum-Circuit Optimizations}

\author{Wei Qiang}
\orcid{0009-0003-2107-1625}
\affiliation{%
  \institution{Columbia University}
  \city{New York}
  \country{USA}
}
\email{wei.qiang@cs.columbia.edu}

\author{Ronghui Gu}
\orcid{0000-0002-6812-6182}
\affiliation{%
  \institution{Columbia University}
  \city{New York}
  \country{USA}
}
\affiliation{%
  \institution{CertiK}
  \city{New York}
  \country{USA}
}
\email{ronghui.gu@columbia.edu}
\include{macro}

\begin{abstract}
Today’s quantum devices are noisy, so reducing circuit size is critical for reliable execution. Existing rule-based optimizers often rely on large rule sets that are difficult to manage and still miss long-distance transformations. We present \sys, a framework for synthesizing compact and expressive quantum-circuit rewrite rules with formal guarantees. We formalize symbolic rewrite rules in which a symbolic gate represents infinitely many subcircuits. We then define canonical symbolic rules of the form $L;S = S;R$ and prove that they constitute a compact generative core from which general symbolic rules can be derived. On top of this formal foundation, given a gate set, \sys synthesizes (1) a small, non-derivable concrete rule set that is complete up to chosen size and qubit bounds, and (2) a small but expressive canonical symbolic rule set that captures transformations beyond finite or monomial-only patterns. We further present rule anchoring to derive optimization-effective rules from canonical symbolic rules. Together, these results provide both expressiveness and guarantees: soundness of synthesized rules via validation, non-derivability, and bounded completeness. On the IBM-Eagle gate set, \sys strictly outperforms state-of-the-art rewrite-based optimizers (Qiskit, Guoq, Quartz, TKET, and Queso) in two-qubit-gate reduction on 90\%, 67\%, 82\%, 85\%, and 83\% of standard quantum algorithm benchmarks, respectively; on Nam gate set, the corresponding rates are  88\%, 74\%, 81\%, 86\%, and 82.9\%. It achieves final average two-qubit-gate reductions of 27.44\% and 29.95\%, respectively.
\end{abstract}

\begin{CCSXML}
<ccs2012>
   <concept>
       <concept_id>10011007.10011006.10011041</concept_id>
       <concept_desc>Software and its engineering~Compilers</concept_desc>
       <concept_significance>500</concept_significance>
       </concept>
   <concept>
       <concept_id>10003752.10003790.10003798</concept_id>
       <concept_desc>Theory of computation~Equational logic and rewriting</concept_desc>
       <concept_significance>300</concept_significance>
       </concept>
   <concept>
       <concept_id>10010583.10010786.10010813.10011726</concept_id>
       <concept_desc>Hardware~Quantum computation</concept_desc>
       <concept_significance>300</concept_significance>
       </concept>
 </ccs2012>
\end{CCSXML}

\ccsdesc[500]{Software and its engineering~Compilers}
\ccsdesc[300]{Hardware~Quantum computation}
\ccsdesc[300]{Theory of computation~Equational logic and rewriting}

\keywords{quantum circuit optimization, rewrite-rule inference, equality saturation, symbolic rewriting}

\maketitle

\section{Introduction}
Quantum computing promises to revolutionize computation by exploiting quantum-mechanical phenomena such as superposition and entanglement to perform tasks that are intractable for classical machines~\cite{tao2025quantum,tao2021gleipnir}. Yet, until recently, quantum hardware has seen limited realization in practice. Today’s quantum hardware—often referred to as Noisy Intermediate-Scale Quantum (NISQ) devices~\cite{preskill2018nisq}—remains severely impacted by short coherence times, gate errors, and restricted qubit connectivity. Recent developments have implemented error-corrected logical qubits and demonstrated the potential to reduce logical errors~\cite{bluvstein2024logical}. As a result, the efficiency of compiled quantum circuits has a direct and often dominant impact on whether an algorithm can be executed reliably on real hardware. Because each operation is noisy, reducing circuit size is essential for lowering error rates. Without optimization, circuits can produce results that are indistinguishable from random noise.

One promising direction is to treat circuit optimization as a process of equational rewriting, where small, equivalent circuit fragments are expressed as rewrite rules. The most prominent approach is peephole optimization, which uses local, bounded-size rewrite rules. Some work relies on domain experts to manually write a small set of rules~\cite{kissinger2020pyzx, VOQC, nam2018automated}, while others synthesize rewrite rules automatically~\cite{queso, quartz}. However, existing techniques still produce a large number of rules, many of which are redundant or derivable from others, and they still fail to capture the full range of useful transformations. For example, previous works synthesize either rewrite rules with symbolic parameters~\cite{quartz} or symbolic rules over monomial gates~\cite{queso}; the latter represent only a finite set of monomial subcircuits, do not induce superposition, and are limited in expressiveness. Both approaches generate large rule sets yet still suffer from limited expressiveness.

The goal of this paper is to answer the following question: Can we automatically synthesize quantum-circuit optimizations that are compact, expressive, and effective in practice?
\sys formalizes symbolic rewrite rules in which symbolic variables represent infinite subcircuit spaces, and their associated constraints characterize the full solution set for which the left-hand side and right-hand side are equivalent. We also formalize canonical symbolic rules of the form $L; S = S; R$, from which many other symbolic rules can be derived. These canonical rules provide a compact generative basis for deriving additional rules. \sys then introduces a novel approach to quantum optimization via rewrite-rule synthesis that addresses both limitations in compactness and expressiveness. We present a framework that, given a gate set, automatically synthesizes: (1) a much smaller set of concrete rewrite rules that are non-derivable yet complete for circuits up to a given size and qubit count, and (2) a small, expressive set of canonical symbolic rewrite rules in which symbolic variables represent infinite subcircuit spaces satisfying specific constraints, including subcircuits that induce superposition.
Although the canonical symbolic rule set is complete up to the $(n,q)$ bounds, many of its rules are size-preserving. To address this, we introduce a novel technique for composing more useful rules from the canonical symbolic rule set, enabling more effective optimization. We demonstrate the effectiveness of our synthesized rules by applying them to optimize quantum circuits using common techniques such as equality saturation~\cite{eqsat} and stochastic search. Our experiments show that our simple optimizer, using the generated rule set, achieves superior optimization results compared to prior approaches that rely on larger, less structured rule sets.
To summarize, \sys contributes the following:
\begin{itemize}
    \item Advances the existing rule inference approach using equality saturation in the synthesis of quantum concrete rules.
    \item Formalizes the notion of symbolic rule and presents a novel synthesis technique that generates symbolic rules efficiently and significantly reduces the term space without compromising completeness.
    \item Introduces a novel rule-composition approach, rule anchoring, that derives more useful rules from the rule set while keeping it small.
    \item Shows that the generated rule set can be integrated with simple optimization algorithms and outperforms state-of-the-art rewrite-based optimizers.
\end{itemize}

\section{Background}
\subsection{Quantum Gates}
A quantum bit (qubit) has two basis states, 0 and 1. A basis state is represented as $\ket{0}$ or $\ket{1}$. A qubit can also be in a linear combination (superposition) of its basis states, $\alpha \ket{0} + \beta\ket{1}$, where $\alpha$ and $\beta$ are complex numbers and $|\alpha|^2 + |\beta|^2 = 1$. $\alpha$ and $\beta$ are called amplitudes. A two-qubit state has four basis states $\ket{00}, \ket{01}, \ket{10}, \ket{11}$ and can be represented as a linear combination of these four basis states. The leftmost qubit is labeled $q_0$, which is the most significant bit, and the rightmost qubit is the least significant bit. Further, an $n$-qubit state can be represented as a linear combination of its $2^n$ basis states. Measurement of a qubit state $\alpha \ket{0} + \beta\ket{1}$ will collapse the qubit to $\ket{0}$ with probability $|\alpha|^2$ and collapse to $\ket{1}$ with probability $|\beta|^2$. A vector representation of a one-qubit state $\alpha \ket{0} + \beta\ket{1}$ is $\begin{bmatrix} \alpha \\ \beta \end{bmatrix}$. A two-qubit state $\alpha \ket{00} + \beta\ket{01} + \gamma\ket{10} + \delta\ket{11}$ can be represented as a vector $\begin{bmatrix} \alpha \\ \beta \\ \gamma \\ \delta \end{bmatrix}$.

Quantum operations transform qubit states of a quantum system. For example, the quantum X gate is analogous to the classical NOT gate: it transforms a qubit state $\alpha \ket{0} + \beta\ket{1}$ to $\beta \ket{0} + \alpha\ket{1}$. The Hadamard gate (H) transforms a qubit state $\ket{0}$ to $\frac{1}{\sqrt{2}}(\ket{0} + \ket{1})$ and $\ket{1}$ to $\frac{1}{\sqrt{2}}(\ket{0} - \ket{1})$. It transforms a basis state to a state in superposition, meaning that measuring the qubit will have a 50\% chance of being 0 and a 50\% chance of being 1. Controlled NOT (CNOT or CX) is a two-qubit gate that flips the second qubit if the first qubit is 1. 

Quantum operations are represented as unitary matrices. For example, the X gate is represented as
$$ X = \begin{bmatrix} 0 & 1 \\ 1 & 0 \end{bmatrix} $$ Applying the X gate to a one-qubit state $\alpha \ket{0} + \beta\ket{1}$ is equivalent to multiplying the matrix X by the state vector $\begin{bmatrix} \alpha \\ \beta \end{bmatrix}$:
$$ X \begin{bmatrix} \alpha \\ \beta \end{bmatrix} = \begin{bmatrix} 0 & 1 \\ 1 & 0 \end{bmatrix} \begin{bmatrix} \alpha \\ \beta \end{bmatrix} = \begin{bmatrix} \beta \\ \alpha \end{bmatrix} $$

Quantum circuits are sequences of quantum gates applied to a set of qubits. 
% A gate set $G$ is $\textbf{universal}$ if any unitary of $2^n \times 2^n$ can be expressed by a finite circuit of $ n$ qubits whose gates are within $G$.
\subsection{Path Sum Representation}
A gate can be written in path-sum notation as
$$\ket{x} \rightarrow \sum_{y \in \mathbb{Z}_{2}} \phi(x,y,\theta) \ket{f(x,y)}$$
Here, $x$ is the input qubit state, $y$ is the path variable that sums over all possible paths, $\theta$ is the symbolic angle parameter, $\phi(x,y,\theta)$ is the phase polynomial that computes the amplitude of each path, and $f(x,y)$ is the output function that computes the output qubit state for each path. 
Intuitively, each path represents a resulting output state. If the gate induces superposition, there are multiple paths. For example, the H gate can be represented as:
$$\ket{x} \rightarrow \frac{1}{\sqrt{2}} \sum_{y \in \mathbb{Z}_{2}} e^{i \pi xy} \ket{y}$$

\noindent If the gate induces no superposition, there is only one path, which we call a monomial gate. For example, the X gate can be represented as: $\ket{x} \rightarrow \ket{\neg x}$.

\subsection{Matrix Semantics}
The semantics of a quantum circuit is a unitary matrix of size $2^n \times 2^n$, where $n$ is the number of qubits. Each entry in the matrix represents the amplitude of transforming one basis state into another. The H gate has the following matrix semantics:
$$ \llbracket H \rrbracket = \frac{1}{\sqrt{2}} \begin{bmatrix} 1 & 1 \\ 1 & -1 \end{bmatrix} $$
Many gates supported by modern quantum devices take symbolic
real-valued parameters; for example, the RZ gate is represented as follows:
$$ \llbracket R_z^{\theta} \rrbracket = \begin{bmatrix} e^{-i\theta/2} & 0 \\ 0 & e^{i\theta/2} \end{bmatrix} $$
% This corresponds exactly to the definition of H gate, which transforms $\ket{0}$ to $\frac{1}{\sqrt{2}}(\ket{0} + \ket{1})$ and $\ket{1}$ to $\frac{1}{\sqrt{2}}(\ket{0} - \ket{1})$. Applying the H gate to $\ket{0}$ is as follows:
% $$H \begin{bmatrix} 1 \\ 0 \end{bmatrix} = \frac{1}{\sqrt{2}} \begin{bmatrix} 1 & 1 \\ 1 & -1 \end{bmatrix} \begin{bmatrix} 1 \\ 0 \end{bmatrix} = \frac{1}{\sqrt{2}} \begin{bmatrix} 1 \\ 1 \end{bmatrix}$$
The semantics of a quantum circuit is computed by matrix multiplication and the tensor product. For example, a circuit with two gates $G_1$ and $G_2$ applied sequentially has the semantics of $M = M_2 \cdot M_1$, where $M_1$ and $M_2$ are the matrix semantics of $G_1$ and $G_2$. A circuit with two gates $G_1$ and $G_2$ applied in parallel on two qubits has the semantics of $M = M_1 \otimes M_2$, where $\otimes$ is the tensor product operator. 
\begin{example}
    Consider a two-qubit circuit in which the H gate is applied to the first qubit and the X gate is applied to the second qubit in parallel, followed by a CNOT gate in which the first qubit is the control and the second is the target. The graph representation of the circuit is
    \[
    \Qcircuit @C=1em @R=.7em {
    & \gate{H} & \ctrl{1} & \qw \\
    & \gate{X} & \targ & \qw
    }
    \]
    Then, the matrix semantics of the circuit is computed as follows: $
        (\llbracket CNOT \rrbracket)(\llbracket H \rrbracket \otimes \llbracket X \rrbracket)
    $.
\end{example}
\subsection{Quantum Rewrite Rule}
\label{sec:rewrite-rule}
Two circuits are equivalent if their matrix semantics are equivalent up to some global phase $\theta$, i.e., $\forall \vec{p}, \exists \theta, M_1(\vec{p}) = e^{i\theta}M_2(\vec{p})$, where $\vec{p}$ are the parameters of the circuits and $M_1,M_2$ are their matrix representations. To eliminate the existential quantifier, we follow an approach similar to that of Quartz~\cite{quartz} and enumerate a list of $\theta$ values that are linear combinations of the symbolic parameters $\vec{p}$ and have been shown to suffice for synthesis. Then, we need only check $M_1(\vec{p}) = M_3(\vec{p})$, where $M_3(\vec{p}) = e^{i\theta} M_2(\vec{p})$.
A rewrite rule is a pair of semantically
equivalent circuits. Rewrite rules can be composed to form a new rule that transforms a longer circuit. Fig.~\ref{fig:rewrite} shows a list of commonly used rewrite rules. Fig.~\ref{fig:rewrite}(a) can be written as $cx\ q_0\ q_1; cx\ q_0\ q_1 \rightarrow empty$.

An Equivalence Circuit Class ($ECC$) is a set of equivalent quantum circuits. $ECCs$ denotes a set of equivalence classes. Any two quantum circuits in an ECC form a valid circuit rewrite. A set of ECCs is $(n, q)$-complete if any valid circuit transformation (rewrite) within size bound $n$ and qubit bound $q$ can be derived by composing transformations from the ECCs. The size of a circuit is its total number of gates, and $q$ denotes the number of qubits used by the circuit.
\begin{figure}[h]
\centering
\begin{minipage}[b]{0.20\textwidth}
    \centering
    \begin{tikzpicture}
        \node[anchor=base] (A) at (0,0) {
    \Qcircuit @C=0.6em @R=1em {
    & \ctrl{1} & \ctrl{1} & \qw\\
    & \targ    & \targ    & \qw
    }
};

\node[anchor=base] (B) at (1.5,0) {
    \Qcircuit @C=0.6em @R=1em {
    & \push{\rule{0pt}{1.4ex}\rule{0.1em}{0pt}} & \qw & \qw \\
    & \push{\rule{0pt}{1.4ex}\rule{0.1em}{0pt}} & \qw & \qw
    }
};
    \draw[->] (A.east) -- ($(B.west |- A.east)$);
    \end{tikzpicture}
\caption*{(a)}
\label{fig:cx-cancel-minimal}
\end{minipage}
\hfill
\begin{minipage}[b]{0.24\textwidth}
    \centering
    \begin{tikzpicture}
        \node (A) at (0,0) {
            \Qcircuit @C=0.4em @R=0.6em {
            & \gate{R_z^\theta} & \ctrl{1} \\
            & \qw & \targ & \\
            }
        };
        \node (B) at (2,0) {
            \Qcircuit @C=0.4em @R=0.6em {
            & \ctrl{1} & \gate{R_z^\theta}\\
            & \targ & \qw \\
            }
        };
        \draw[->] (A) -- (B) node[midway, above] {};
    \end{tikzpicture}
\caption*{(b)}
\label{fig:rz-merge-minimal}
\end{minipage}
\hfill
\begin{minipage}[b]{0.20\textwidth}
    \centering
    \begin{tikzpicture}
        \node (A) at (0,0) {
            \Qcircuit @C=0.6em @R=0.6em {
            & \gate{R_z^{\theta_1}} & \gate{R_z^{\theta_2}} & \qw\\
            }
        };
        \node (B) at (0,-1.2) {
            \Qcircuit @C=0.6em @R=0.6em {
            & \gate{R_z^{\theta_1 + \theta_2}} & \qw\\
            }
        };
        \draw[->] (A) -- (B) node[midway, above] {};
    \end{tikzpicture}
\caption*{(c)}
\label{fig:rz-cx-minimal}
\end{minipage}
\hfill
\begin{minipage}[b]{0.26\textwidth}
    \centering
    \begin{tikzpicture}
        \node (A) at (0,0) {
            \Qcircuit @C=0.6em @R=0.8em {
            & \qw & \ctrl{1} \qw & \qw \\
            & \gate{X} & \targ & \qw \\
            }
        };
        \node (B) at (2,0) {
            \Qcircuit @C=0.6em @R=0.8em {
            & \ctrl{1} & \qw & \qw \\
            & \targ & \gate{X} & \qw \\
            }
        };
        \draw[->] (A) -- (B) node[midway, above] {};
    \end{tikzpicture}
\caption*{(d)}
\label{fig:cx-x-minimal}
\end{minipage}
\caption{Quantum rewrite rule example}
\Description{Four quantum-circuit rewrite rules: two consecutive CX gates cancel; an Rz gate on the control qubit commutes with a CX gate; two consecutive Rz gates merge by adding their rotation angles; and an X gate on the target qubit commutes with a CX gate.}
\label{fig:rewrite}
\vspace{-2em}
\end{figure}
\subsection{Equality Saturation}
An e-graph is a data structure commonly used in theorem provers~\cite{detlefs2005simplify, joshi2002denali, nelson1980techniques} to compactly store equality relations between terms and congruence relations derived from them.
An e-graph consists of a set of e-classes~(equivalence classes), each representing a set of equivalent terms, with each term represented as an e-node. Equality saturation~\cite{eqsat} is an optimization technique that utilizes e-graphs to perform program rewrites. The main idea is to add all possible equivalent terms of a program to the e-graph using a set of rewrite rules and then extract the optimal term according to a cost function. The process of adding equivalent terms to the e-graph is called saturation. During saturation, for each rewrite rule, if the left-hand side (lhs) of the rule matches an e-class in the e-graph, then the right-hand side (rhs) of the rule is added to the same e-class.
Fig.~\ref{fig:egraph} is an example of an e-graph and equality saturation. 
\begin{wrapfigure}{r}{0.4\textwidth}
\centering
\vspace{-1.5em}
\includegraphics[width=0.35\textwidth]{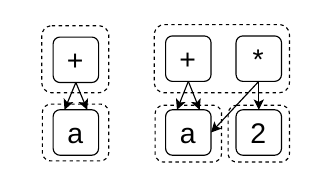}
    \vspace{-1em}
    \caption{The e-graph on the left contains the term $a + a$. Running equality saturation using the rewrite rule $a + a \leftrightarrow a * 2$ yields the e-graph on the right.}
    \Description{Two e-graphs illustrate equality saturation. The left e-graph represents the expression a plus a. The right e-graph adds the equivalent expression a times two to the same equivalence class after applying a rewrite rule.}
    \vspace{-1em}
    \label{fig:egraph}
\end{wrapfigure}
This process is repeated until no more terms can be added to the e-graph. A major advantage of equality saturation engines is that they are non-destructive: they retain all equivalences derived, regardless of the order in which the rules are applied. Finally, the optimal term is extracted from the e-graph by searching for the term with the lowest cost, as defined by a specific cost function.
We utilize equality saturation to check the derivability of a rewrite rule from a set of smaller rewrite rules. If $\mathbb{U}$ is a domain of rewrite rules (e.g., all possible rules within a gate set), a set of rewrite rules is non-derivable if no rule in the set is derivable by composing other smaller rules in $\mathbb{U}$. The size of a quantum-circuit rewrite rule is the maximum number of gates on either its lhs or rhs. Therefore, the rules in Fig.~\ref{fig:rewrite} are non-derivable because no smaller rules exist from which they can be composed.
\begin{figure}
\includegraphics[width=0.8\textwidth, trim=1cm 1cm 1cm 1cm]{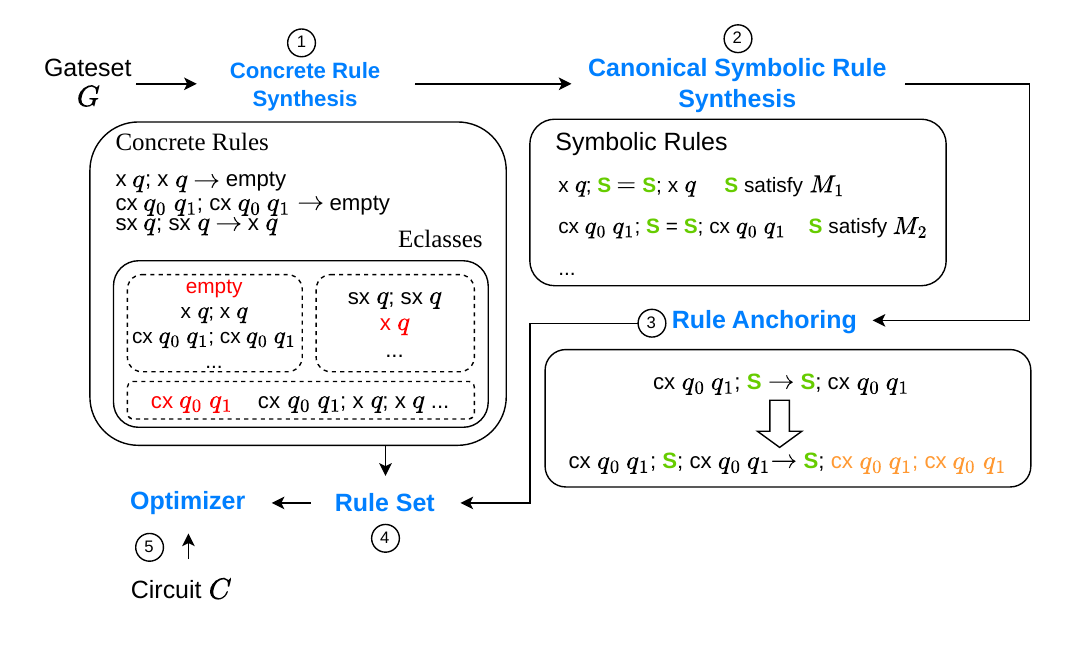}
    \caption{Overview of \sys}
    \Description{The four-stage \sys pipeline. It synthesizes compact concrete rules from grouped circuit equivalence classes, derives canonical symbolic rules with matrix constraints, anchors selected symbolic rules so they enable concrete reductions, and supplies the resulting rule set to a circuit optimizer.}
    \label{fig:overview}
\end{figure}
\section{Overview}
This section describes \sys, a synthesis engine that generates a compact set of symbolic rewrite rules that can represent an infinite space of subcircuits. Each rule has a symbolic subcircuit whose constraints characterize the full set of solutions under which the rule is valid. \sys proves that (1) the rule set is non-derivable, meaning that no rule in the set can be derived from the others, and (2) any symbolic rule whose concrete part lies within the $(n, q)$ bounds can be derived from the canonical rules generated by \sys.

Fig.~\ref{fig:overview} provides an overview of our synthesis pipeline. The synthesis engine takes a gate set, which includes a list of gates and their semantics. In step \textcircled{1}, \sys synthesizes a small set of non-derivable concrete rules that are $(n, q)$-complete for given values of $n$ and $q$ by enumerating all possible circuit terms up to size $n$ and qubit count $q$, grouping them into equivalence classes, and inferring non-derivable rewrite rules from each equivalence class.
% However, although the synthesized concrete rules are minimal and $(n,q)$-complete, composing them into long-distance rewrites remains difficult, even with equality saturation. 
In each e-class (dotted round box), the \textcolor{red}{red} circuit is the class representative, typically the shortest circuit.
In step \textcircled{2}, given the previously synthesized e-classes, \sys synthesizes a set of canonical symbolic rewrite rules. Each e-class representative is enumerated as either the lhs or rhs concrete part of a symbolic rule.
% To reduce the search space, \sys formalizes canonical symbolic rules of the form $L; S = S; R$. We prove that any other symbolic rule can be derived from this form. 
In each symbolic rule, the variable \textcolor{S}{\textbf{$S$}} denotes any subcircuit that satisfies constraint $M_i$, where $M_i$ characterizes the full solution set under which the lhs and rhs are equivalent. In step \textcircled{3}, \sys designs a novel approach that selectively appends prefixes or suffixes to existing canonical symbolic rules so that a concrete rule can be applied afterward. These canonical symbolic rules serve as a generative basis for composing larger and more useful rules that reduce circuit size and collaborate seamlessly with concrete rules. In the figure, the orange part of the anchored rule can match the lhs of a concrete rule, which can reduce that part to an empty circuit. In step \textcircled{4}, concrete rules and anchored symbolic rules are collected into the final rule set, which is used as input to our optimizer.
\begin{figure}[h]
    \centering
    \begin{minipage}[b]{0.25\textwidth}
    \centering
    \begin{tikzpicture}
        \node (A) at (0,1.5) {
            \Qcircuit @C=1em @R=.7em {
            & \gate{R_z^{\theta_1}} & \gate{H} & \qw\\
            %\gategroup{1}{2}{1}{2}{.7em}{--}
            }
        };
        \node (B) at (0,0) {
            \Qcircuit @C=1em @R=.7em {
            & \gate{H} & \gate{R_x^{\theta_1}} & \qw \\
            %\gategroup{1}{2}{1}{2}{.7em}{--}
            }
        };
        \draw[->] (A) -- (B) node[midway, above] {};
        \draw[thick, rounded corners=3pt, draw=green!80!black,
            fill=green!30,
            fill opacity=0.25,
            line width=1pt]
    ($(A.north west)+(1.35cm, 0cm)$) --
    ++(0.8cm,0) -- ++(0,-0.8cm) -- ++(-0.8cm,0) -- cycle;
    \draw[thick, rounded corners=3pt, draw=green!80!black,
            fill=green!30,
            fill opacity=0.25,
            line width=1pt]
    ($(B.north west)+(0.35cm, 0cm)$) --
    ++(0.8cm,0) -- ++(0,-0.8cm) -- ++(-0.8cm,0) -- cycle;
    \end{tikzpicture}
    \caption*{Example (e)}
    \end{minipage}
    \hfill
    % \begin{minipage}[b]{0.3\textwidth}
    % \centering
    % \begin{tikzpicture}
    %     \node (A) at (0,1.5) {
    %         \Qcircuit @C=1em @R=.7em {
    %         \qw & \gate{R_z^{\theta_1}} & \gate{X} & \gate{H} & \qw\\
    %         %\gategroup{1}{3}{1}{4}{.7em}{--}
    %         }
    %     };
    %     \node (B) at (0,0) {
    %         \Qcircuit @C=1em @R=.7em {
    %         \qw & \gate{X} & \gate{H} & \gate{R_x^{\theta_1}} & \qw \\
    %         %\gategroup{1}{2}{1}{3}{.7em}{--}
    %         }
    %     };
    %     \draw[->] (A) -- (B) node[midway, above] {};
    %     \draw[thick, rounded corners=3pt, draw=green!80!black,
    %         fill=green!30,
    %         fill opacity=0.25,
    %         line width=1pt]
    % ($(A.north west)+(1.3cm, 0cm)$) --
    % ++(1.6cm,0) -- ++(0,-0.8cm) -- ++(-1.6cm,0) -- cycle;
    %  \draw[thick, rounded corners=3pt, draw=green!80!black,
    %         fill=green!30,
    %         fill opacity=0.25,
    %         line width=1pt]
    % ($(B.north west)+(0.3cm, 0cm)$) --
    % ++(1.6cm,0) -- ++(0,-0.8cm) -- ++(-1.6cm,0) -- cycle;
    % \end{tikzpicture}
    % \caption*{Example (e)}
    % \end{minipage}
    % \hfill
    \begin{minipage}[b]{0.7\textwidth}
    \centering
    \begin{tikzpicture}
        \node (A) at (0,2) {
            \Qcircuit @C=1em @R=.7em {
            & \gate{R_z^{\theta_1}} & \ctrl{1} & \gate{Z} & \ctrl{1} & \gate{H} & \targ & \gate{X} & \qw\\
            & \qw & \targ & \qw & \targ& \gate{X} & \ctrl{-1} & \qw & \qw
            %\gategroup{1}{3}{1}{4}{.7em}{--}
            }
        };
        \node (B) at (0,0) {
            \Qcircuit @C=1em @R=.7em {
            & \ctrl{1} & \gate{Z} & \ctrl{1} & \gate{H} & \targ & \gate{X} & \gate{R_x^{\theta_1}}\\
            & \targ & \qw & \targ& \gate{X} & \ctrl{-1} & \qw & \qw
            %\gategroup{1}{2}{1}{3}{.7em}{--}
            }
        };
        \draw[->] (A) -- (B) node[midway, above] {};
        \draw[thick, rounded corners=3pt, draw=green!80!black,
            fill=green!30,
            fill opacity=0.25,
            line width=1pt]
    ($(A.north west)+(1.3cm, 0cm)$) --
    ++(4.3cm,0) -- ++(0,-1.5cm) -- ++(-4.3cm,0) -- cycle;
     \draw[thick, rounded corners=3pt, draw=green!80!black,
            fill=green!30,
            fill opacity=0.25,
            line width=1pt]
    ($(B.north west)+(0.3cm, 0cm)$) --
    ++(4.3cm,0) -- ++(0,-1.5cm) -- ++(-4.3cm,0) -- cycle;
    \end{tikzpicture}
    \caption*{Example (f)}
    \end{minipage}
    \caption{Variants of RZ-RX transformation}
    \Description{Two examples of the same long-distance transformation at different scales. An Rz rotation moves across a highlighted intermediate subcircuit and becomes an Rx rotation; the first example uses a single Hadamard gate, while the second uses a longer two-qubit gate sequence.}
    \label{fig:hadamard-rz}
\vspace{-0.8em}
\end{figure}

\noindent\textbf{Step \textcircled{1}. Small $(n,q)$-Complete Concrete Rule Set.}
\sys first constructs a small, non-derivable $(n,q)$-complete rule set that can represent all equivalences up to a given circuit size $n$ and qubit count $q$. It enumerates all circuits of sizes 1 through $n$ on $q$ qubits while removing redundant circuits that are derivable from the rules already learned.
\begin{example}
The figure below shows a transformation $cx\ q0 \ q1; rz(\theta)\ q0; x\ q1; cx\ q0 \ q1$ $\rightarrow$ $rz(\theta)\ q0; x\ q1$. This transformation can be derived using the rules in Fig.~\ref{fig:rewrite} by applying the rules in Fig.~\ref{fig:rewrite}(b), Fig.~\ref{fig:rewrite}(d), and Fig.~\ref{fig:rewrite}(a), in that order. Therefore, this rule is not learned because smaller rules already capture the equivalence.
\end{example}
\makebox[\textwidth][c]{
$\vcenter{\hbox{
    \begin{tikzpicture}
    \node (A) at (0,0) {
        \Qcircuit @C=0.5em @R=.5em {
        & \ctrl{1} & \gate{R_z^{\theta}} & \ctrl{1} & \qw \\
        & \targ & \gate{X} & \targ & \qw \\
        }
    };
    \node (B) at (3,0) {
    \Qcircuit @C=0.5em @R=.5em {
       & \gate{R_z^{\theta}} & \ctrl{1} & \qw & \ctrl{1} \\
        & \qw & \targ & \gate{X} & \targ \\
        }
    };
    \node (C) at (6,0) {
    \Qcircuit @C=0.5em @R=.5em {
        & \gate{R_z^{\theta}} & \ctrl{1} & \ctrl{1} \\
        & \gate{X} & \targ & \targ \\
        }
    };
    \node (D) at (8,0) {
    \Qcircuit @C=0.5em @R=.5em {
        & \gate{R_z^{\theta}} &\qw\\
        & \gate{X} & \qw \\
        }
    };
    \draw[->] (A) -- (B) node[midway, above] {};
    \draw[->] (B) -- (C) node[midway, above] {};
    \draw[->] (C) -- (D) node[midway, above] {};

   \draw[thick, rounded corners=3pt, draw=blue!80!black,
            fill=blue!30,
            fill opacity=0.25,
            line width=1pt]
    ($(A.north west)+(0.2cm, 0cm)$) --
    ++(1.3cm,0) -- ++(0,-0.85cm) -- ++(-0.8cm,0) -- ++(0,-0.6cm) -- ++(-0.5cm,0) -- cycle;
    \draw[thick, rounded corners=3pt, draw=orange!80!black,
            fill=orange!30,
            fill opacity=0.25,
            line width=1pt]
    ($(B.north west)+(1cm, -0.15cm)$) --
    ++(1.1cm,0) -- ++(0,-1.3cm) -- ++(-1.1cm,0) -- cycle;
    \draw[thick, rounded corners=3pt, draw=green!80!black,
            fill=green!30,
            fill opacity=0.25,
            line width=1pt]
    ($(C.north west)+(0.95cm, -0.15cm)$) --
    ++(1cm,0) -- ++(0,-1.3cm) -- ++(-1cm,0) -- cycle;
    \end{tikzpicture}
    }}$
}
\vspace{1em}

Prior work~\cite{quartz,queso} often produces thousands to tens of thousands of rewrite rules, even when restricted to circuits of sizes smaller than six. To address this, given a gate set $G$, we construct a much smaller, non-derivable $(n,q)$-complete rule set that can represent all equivalences up to a given circuit size $n$ and qubit count $q$. The constructed equivalences are then used in symbolic-rule synthesis.

\noindent\textbf{Step \textcircled{2}. Expressive Symbolic Rule Set.}
One limitation of concrete rules is that they are guaranteed to capture only equivalences of concrete circuits enumerated within $(n,q)$. Figure~\ref{fig:hadamard-rz} shows two rewrite rules that share a concrete part but differ in their intermediate parts. For both intermediate parts, RZ can move to the right and become RX. In fact, the space of such intermediate parts is infinite, and there are infinitely many such subcircuits longer than $n$ that concrete rules are not guaranteed to derive. The purpose of an expressive symbolic rule is to represent the full space of such intermediate parts, compactly encoded as a constraint.
\begin{figure}[h]
    \centering
    \begin{tikzpicture}
        \node (A) at (0,0) {
            \Qcircuit @C=1em @R=.7em {
            & \gate{R_z^{\theta}} & \gate{S} & \qw\\
            % & \qw & \ghost{S} & \qw\\
            }
        };
        \node (B) at (3,0) {
            \Qcircuit @C=1em @R=.7em {
            & \qw & \gate{S} & \gate{R_x^{\theta}} & \qw \\
            % & \qw & \ghost{S} & \gate{R_x^{\theta}} & \qw\\
            }
        };
        \draw[->] (A) -- (B) node[midway, above] {};
    \end{tikzpicture}
    \caption{Circuit graph representation of an RZ-RX transform symbolic rule}
    \Description{A symbolic rewrite in which an Rz rotation before an arbitrary subcircuit S is transformed into an Rx rotation after S, subject to the symbolic rule constraint.}
    \label{fig:symbolic-rzrx}
\end{figure}

\begin{example}
Fig.~\ref{fig:symbolic-rzrx} shows the symbolic rewrite rule that transforms an RZ gate at the start into an RX gate at the end, with a symbolic subcircuit $S$ in between. \sys synthesizes such rules and the constraints on $S$ so that the lhs and rhs are equivalent when the constraints are satisfied. The constraints are represented as a symbolic matrix with variables and are expressive enough to support rules in which $S$ can represent an infinite subcircuit space that may induce superposition. For example, \sys allows the rule in Fig.~\ref{fig:symbolic-rzrx} to match subcircuits that may induce superposition, such as Example (f) in Fig.~\ref{fig:hadamard-rz}.
% such as:
% $$
% rz(\theta)\ q0; cx\ q0\ q1; cx\ q1\ q0; cx\ q0\ q1;  h\ q1; 
% $$
% and transform it to:
% $$
% cx\ q0\ q1; cx\ q1\ q0; cx\ q0\ q1; h\ q1; rx(\theta)\ q1;
% $$ 
\end{example}
\noindent If there is no superposition in $S$, then $S$ is a symbolic monomial gate: modeling the constraint space of $S$ involves only finite permutations in which one qubit state transforms into another because each state is represented by a finite-length bit string. Below is a monomial-gate representation that transforms a qubit state $x$ into another state without superposition.
$$ \ket{x} \rightarrow \ket{f(x)}$$
If $x$ is a one-qubit state, then there are two interpretations of the function: $\{\ket{0} \rightarrow \ket{0}, \ket{1} \rightarrow \ket{1}\}$ and $\{\ket{0} \rightarrow \ket{1}, \ket{1} \rightarrow \ket{0}\}$.
One can simply enumerate all possible finite permutations to model the constraint space and determine the interpretations under which the rule is valid.
However, supporting the synthesis of symbolic rules with a symbolic gate whose solution space includes monomial subcircuits and subcircuits that induce superposition is challenging. Here is the path sum of a circuit with superposition:
$$\ket{x} \rightarrow \sum_{y \in \mathbb{Z}_{2}} \phi(x,y,\theta) \ket{f(x,y)}$$
Every amplitude $\phi$ in every path is a complex number and is subject to the constraint $\sum_{i} |\phi_i|^2 = 1$. Consequently, the path sum admits an infinite number of interpretations. For example, interpretations include $\{\ket{0} \rightarrow \tfrac{1}{\sqrt{2}}(\ket{0}+\ket{1}),\ \ket{1} \rightarrow \tfrac{1}{\sqrt{2}}(\ket{0}-\ket{1})\}$, $\{\ket{0} \rightarrow \tfrac{1}{\sqrt{2}}(\ket{0}+i\ket{1}),\ \ket{1} \rightarrow \tfrac{1}{\sqrt{2}}(i\ket{0}+\ket{1})\}$, $\{\ket{0} \rightarrow \cos\theta\ket{0}+\sin\theta\ket{1},\ \ket{1} \rightarrow -\sin\theta\ket{0}+\cos\theta\ket{1}\}$, etc., where $\theta$ can take infinitely many values.
While prior work~\cite{queso} supports only symbolic rules in which $S$ can match monomial subcircuits, \sys models the solution space of $S$ as a symbolic matrix to compactly represent the full solution space, which consists of an infinite number of interpretations of concrete circuits.

\noindent\textbf{Canonical Rules.} Enumerating all possible symbolic rules of the form $G_i;S;G_j \rightarrow G_i';S;G_j'$ causes combinatorial explosion. However, we prove that every symbolic rule $G_i;S;G_j \rightarrow G_i';S;G_j'$ has a canonical rule represented as $L;S = S;R$, such that $G_i;S;G_j \rightarrow G_i';S;G_j'$ can be derived from its canonical form, and many symbolic rules can be reduced to the same canonical form. This means that, during candidate enumeration, we only need to place the symbolic variable at the two boundaries (prefix/suffix), instead of considering all placements where \(S\) appears in the interior of the circuit, while capturing the same amount of equivalences.
\begin{example}
    \label{ex:canonical-cx}
    The CX cancellation rule (1) has a canonical symbolic rule (2).
    \begin{align}
    cx\ q_0\ q_1; S; cx\ q_0\ q_1 \rightarrow S \tag{1}\\
    cx\ q_0\ q_1; S = S; cx\ q_0\ q_1 \tag{2}
    \end{align}
\end{example}
Intuitively, one can always derive (1) from (2) by right-appending $cx\ q_0\ q_1$ to both sides and applying the concrete rule $cx\ q_0\ q_1; cx\ q_0\ q_1 \rightarrow empty$.

Even with canonical forms, there are still several challenges to address: (1) Enumerating all canonical forms of symbolic rules and checking each candidate individually remains intractable because the search space is still exponential if we naively enumerate $L$ and $R$. (2) Directly solving $S$ for unitary solutions requires solving a list of non-linear equations, which are computationally hard~\cite{calculusofcomputation}. \sys introduces a property-grouping method that groups pairs of $L$ and $R$ with similar properties and proves that, once $L$ and $R$ are in the same group, there exists a solution for $S$ such that $\llbracket S \rrbracket \llbracket L \rrbracket = \llbracket R \rrbracket \llbracket S \rrbracket$. This solves the first challenge by grouping candidates that have solutions for $S$, thereby avoiding checks of all $N^2$ candidates. \sys solves the second challenge by avoiding the direct solution of the unitary constraints on $S$ and deriving only general solutions for $S$ without determining whether a unitary solution exists. However, we show that the property-grouping method can determine whether a unitary solution exists for $S$, allowing us to derive valid rules and check whether a subcircuit satisfies the constraints on $S$ without solving the unitary constraints.

\noindent\textbf{Step \textcircled{3}. Anchoring canonical rules.}
After generating canonical symbolic rules with concrete parts bounded by size $n$ and qubit count $q$, \sys proves that other symbolic rules whose canonical forms are within those bounds can be derived from the canonical rules. Once canonical symbolic rules are generated, they serve as a generative basis that \sys uses to derive more useful rules. However, composing all possible rules is impractical because the set of possible rules is too large. Although the canonical rules already capture the equivalences, they are mostly size-preserving. \sys introduces a novel rule anchoring technique by selectively appending the same prefix or suffix to both sides of a canonical rule. The resulting symbolic rules are designed to collaborate with concrete rules: after a symbolic rule is applied, a concrete rule can be applied to the resulting term to reduce gate size.
\begin{example}
Consider the same rule from Example~\ref{ex:canonical-cx}: $cx\ q_0\ q_1; S = S; cx\ q_0\ q_1$.
\sys first observes that the concrete suffix on the right-hand side, $cx\ q_0\ q_1$, matches the prefix of the lhs of the concrete rule in Fig.~\ref{fig:rewrite}(a): $cx\ q_0\ q_1;cx\ q_0\ q_1 \rightarrow empty$.
Then, \sys "anchors" the canonical rule by appending $cx\ q_0\ q_1$ to the right of both sides:
\[
cx\ q_0\ q_1; S; cx\ q_0\ q_1 \rightarrow S; \tcboxmath[colback=yellow!15,colframe=orange!70!black, boxsep=1pt, left=1pt, right=1pt] {cx\ q_0\ q_1; cx\ q_0\ q_1} \tag{3}
\]
Observe that the circuit in the orange box is now the lhs of the rule $cx\ q_0\ q_1;cx\ q_0\ q_1 \rightarrow empty$, which enables that rule to apply after anchoring.
Intuitively, during synthesis, we retain the anchored rule (3) rather than explicitly deriving rule (1) from the canonical rule. The additional suffix enables rule (3) to match CX gates on both sides of $S$ and move one CX gate next to the other. This creates the orange-boxed subcircuit, to which the concrete cancellation rule in Fig.~\ref{fig:rewrite}(a) can be applied at optimization time to cancel the two cx gates.
\end{example}
\noindent\textbf{Steps \textcircled{4} and \textcircled{5}. Applying the rules.}
After generating both symbolic and concrete rules, we demonstrate that they can be applied using a simple optimizer that combines equality saturation with a stochastic algorithm. Concrete rules are used directly in equality saturation, while symbolic rules are applied stochastically. Our experiments show that, with symbolic rules, this simple optimizer can already outperform most state-of-the-art rewrite-based optimizers.

\noindent\textbf{Theoretical Conclusions.} 
(1) We prove that every symbolic rule has a canonical form and that every symbolic rule can be reduced to and derived from its canonical form. (2) We prove that, for any canonical symbolic rule, if $\llbracket L \rrbracket $ and $\llbracket R \rrbracket$ share the same set of eigenvalues, there exists a unitary solution for $S$ such that $L; S = S; R$. (3) We prove that \sys generates a set of non-derivable canonical symbolic rules.
To keep the main text concise, longer proofs of lemmas and theorems are provided in Appendices A and B.
\section{Synthesizing Concrete Rules}
Previous synthesis engines~\cite{queso, quartz} generate hundreds or thousands of quantum rewrite rules, many of which are derivable from smaller rules. \sys synthesizes a much smaller concrete rule set that can derive larger rules while guaranteeing $(n,q)$-completeness. It also removes rules that are derivable from previously generated rules. \sys combines Ruler-style rule inference~\cite{ruler} with equality saturation to infer a non-derivable rule set, and uses an efficient Polynomial Identity Filter (PIF)~\cite{queso} to group equivalent circuits. Together, these components synthesize concrete rules that represent circuit equivalence classes (ECCs) within $(n,q)$ bounds.
\subsection{Bottom-Up Synthesis Algorithm}
\paragraph{Our Algorithm}
Algorithm~\ref{alg:concrete} shows how \sys synthesizes concrete rules.
\begin{figure}[t]
\centering
\resizebox{0.9\linewidth}{!} {
\begin{minipage}{1.0\textwidth}
\begin{algorithm}[H]
\small
\caption{Concrete Rule Synthesis}
\label{alg:concrete}
\begin{algorithmic}[1]
\Function{SynthesizeConcrete}{n, q}
\State $PIF \gets InitPIFMap()$
\State $L \gets \emptyset$
\For{$i := 1$ to $n$}
\State $X \gets enumerate(i, q)$
\State $PIF.insert(x) $ for each $x$ $\in$ $X$
\State $C \gets \emptyset$
\For{each class EC grouped by PIF}
\State $egraph \gets EmptyEgraph()$
\For{each term $T \in EC$}
\State $egraph.add(T)$
\EndFor
\State $egraph.run\_saturation(L)$
\For{all $(A, B) \in egraph.eclasses \times egraph.eclasses$}
\If{A, B not in same eclass of egraph}
\State C $\gets C \cup \{(A, B)\}$
\EndIf
\EndFor
\EndFor
\State $E \gets \emptyset$ ; 
\State $E \gets Choose(C, L)$
\State $L \gets L \cup Canonicalize(E)$
\EndFor
\State \textbf{return} $L$
\EndFunction
\Statex
\Function{Choose}{C, L}
\State $R \gets \emptyset$
\State $q \gets PriorityQ(C)$  
\While{q is not empty}
\State $r \gets \ q.pop()$
\If{$derivable(r, L \cup R)$}
\State{continue}
\EndIf
\If{$smt.check\_equivalent(r.lhs, r.rhs)$}
\State $R \gets R \cup \{r\}$
\EndIf
\EndWhile
\State \textbf{return} $R$
\EndFunction
\end{algorithmic}
\end{algorithm}
\end{minipage}
}
\Description{Pseudocode for concrete-rule synthesis. Circuits are enumerated by size, grouped using a polynomial identity filter, saturated with previously learned rules in e-graphs, filtered for derivability, validated for equivalence with an SMT solver, and added to the final compact rule set.}
\label{fig:concrete-synthesis}
\end{figure}
Lines 5--6 describe grouping all enumerated terms into classes. PIF~\cite{queso} groups equivalent quantum circuits efficiently with a low failure rate. Using PIF, \sys groups equivalent terms into classes while keeping the probability of mixing semantically different circuits extremely low. The PIF map groups terms by a PIF-computed hash. Within each group, \sys filters out rules that are derivable from the verified rule set $L$ learned in previous iterations, thereby removing duplicates.
In Lines 9--19, the e-graph adds each term of size $i$ in a class and runs saturation with the current learned rules to merge derivable terms. In Lines 15--17, if $A$ and $B$ cannot be proven equivalent under the current learned rules, the tuple $(A,B)$ is added to $C$. Thus, $C$ tracks likely equivalences that are not yet derivable.
Method \texttt{Choose} selects new rules from $C$. It prioritizes candidates using heuristics (e.g., smaller rules first) via a priority queue initialized with $C$. \texttt{Choose} pops candidates in priority order. For each selected rule $r$, it checks whether $r$ is derivable from existing rules $R \cup L$. If it is not derivable, it validates equivalence with the SMT solver. Method \texttt{derivable} checks derivability under $R \cup L$ using equality saturation. In Line 21, newly derived rules $E$ are canonicalized to reduce duplication and then added to the learned rule set $L$.

\noindent\textbf{Correctness.}
Because PIF is used only for grouping, every rule selected by \texttt{Choose} is validated with SMT-based equivalence checking, following Quartz~\cite{quartz}. Therefore, all rules in $L$ are validated, and $L$ always remains sound. In Line 13, the e-graph is saturated only with $L$, preserving soundness during saturation. \sys discards derivable rules early, so only a small set of non-derivable candidates proceeds to SMT verification.

\noindent\textbf{Completeness.}
Given a gate set $G$, function \texttt{enumerate(i, q)} in Line 3 of Algorithm~\ref{alg:concrete} enumerates all circuit terms of size $i$ on $q$ qubits. Therefore, all rewrite rules formed by terms with size up to $n$ and qubit count $q$ can be derived by the rule set synthesized by Algorithm~\ref{alg:concrete}. 

\noindent\textbf{Remove Redundant Rules.}
\texttt{Choose} removes rules that are already derivable by running equality saturation with learned rules $L$ and newly selected rules. If a rule's lhs and rhs are already in the same e-class, that rule is derivable from existing rules. Therefore, \texttt{Choose} selects only rules that are not yet derivable. \sys also adopts pruning techniques from Quartz~\cite{quartz}: (1) \texttt{Enumerate} selects one representative term of size $i$ from each equivalence class to construct terms of size $i+1$; and (2) circuits with common prefix or suffix subcircuits are filtered out. In \texttt{Choose}, the priority queue selects smaller and more abstract rules first. For example, $rz(\theta_1); rz(\theta_2) \rightarrow rz(\theta_1 + \theta_2)$ is selected before $rz(\pi); rz(\pi) \rightarrow rz(\pi + \pi)$; for rules with the same structure, the most general valid rule is selected first.
\noindent\textbf{Rule Canonization.}
The \texttt{Canonicalize} function in Algorithm~\ref{alg:concrete} converts rewrite rules into canonical form to reduce duplication. For example, we generated rules like 
$$x\ q0; x\ q0 \rightarrow empty$$
$$x\ q1; x\ q1 \rightarrow empty$$
The two rules have the same meaning, except they act on different qubits. Thus, all such rules can be canonized into a single rule: $x\ q; x\ q\rightarrow empty$.

\section{Synthesizing Symbolic Rewrite Rules}
Large concrete rule sets are hard to manage and can cause both synthesis and optimization costs to grow rapidly as term size increases. On the other hand, very small rule sets may suffer in optimization quality because long-distance rewrites are difficult to compose. In such cases, either many iterations of equality saturation~\cite{eqsat} are required, or search algorithms such as beam search~\cite{queso, quartz} take a long time to discover large equivalences. For long circuits, equality saturation can quickly fill the e-graph with e-nodes. To address these limitations, we synthesize symbolic rules with symbolic variables, allowing a rule to represent infinitely many subcircuits for which the lhs and rhs are equivalent. These rules can match long-distance subcircuits as long as the constraints are satisfied.
Previous work supports symbolic gate parameters~\cite{quartz, queso} or symbolic monomial gates~\cite{queso}, which represent only a finite set of subcircuit interpretations. However, real circuits often contain superposition, which requires more expressive symbolic rules that can represent subcircuits with superposition.
%  Fig.~\ref{fig:hadamard-rz} is another example showing rule patterns that can represent subcircuits that induce superposition, showing variants of RZ-RX transformation rules where RZ gate can commute gate sequences that are possibly inducing superposition.
\subsection{Symbolic Rule}
A symbolic rule $R_s$ can be represented by the equation $G_1; S; G_2 = G_1'; S; G_2'$, where $G_1, G_2, G_1', G_2'$ are concrete circuit sequences, and $S$ is a symbolic variable representing a set of subcircuits such that, for any subcircuit $C$ in the set, $G_1; C; G_2 = G_1'; C; G_2'$. The $S$ on the lhs is used to match a subcircuit $C$, and the $S$ on the rhs indicates where $C$ should be substituted.
Fig.~\ref{fig:symbolic-rule} shows a circuit-graph representation of the symbolic rule in Example~\ref{ex:canonical-cx}. The goal of \sys is to find a representation of $S$ that captures the complete set of unitary interpretations of $S$ under which the equation holds.
\begin{figure}[h]
\begin{minipage}[b]{0.6\textwidth}
    \centering
    % \begin{tikzpicture}
    %     \node (A) at (0,0) {
    %         \Qcircuit @C=1em @R=.7em {
    %         & \gate{R_z^{\theta_{1}}} & \multigate{1}{S} & \qw & \qw\\
    %         & \qw & \ghost{S} & \gate{R_z^{\theta_{2}}} & \qw\\
    %         }
    %     };
    %     \node (B) at (4,0) {
    %         \Qcircuit @C=1em @R=.7em {
    %         & \qw & \multigate{1}{S} & \qw & \qw \\
    %         & \qw & \ghost{S} & \gate{R_z^{\theta_{1}+\theta_{2}}} & \qw\\
    %         }
    %     };
    %     \draw[->] (A) -- (B) node[midway, above] {};
    % \end{tikzpicture}
    \begin{tikzpicture}
        \node (A) at (0,0) {
            \Qcircuit @C=1em @R=.7em {
            & \ctrl{1} & \multigate{1}{S} & \ctrl{1}& \qw\\
            & \targ & \ghost{S} & \targ & \qw\\
            }
        };
        \node (B) at (3,0) {
            \Qcircuit @C=1em @R=.7em {
            & \qw & \multigate{1}{S} & \qw \\
            & \qw & \ghost{S} & \qw\\
            }
        };
        \draw[->] (A) -- (B) node[midway, above] {};
    \end{tikzpicture}
    \caption{Graph representation of a CX-cancellation symbolic rule.}
    \label{fig:symbolic-rule}
\end{minipage}
\hfill
\begin{minipage}[b]{0.35\textwidth}
\centering
\[ 
S = \begin{bmatrix} a & b & c & c\\ d & m & f & f \\ g & h & j & k \\ g & h & k & j \end{bmatrix}
\]
\caption{Symbolic matrix that represents the intermediate subcircuit of CX cancellation.}
\Description{The left panel shows a symbolic CX-cancellation rule in which an arbitrary two-qubit subcircuit S is enclosed by two CX gates and the rewrite removes both CX gates. The right panel gives the structured four-by-four symbolic matrix that characterizes valid instantiations of S.}
\label{fig:symbolic-matrix}
\end{minipage}
\vspace{-1.4em}
\end{figure}
\subsection{Constraints}
\label{sec:constraint}
\sys represents the solution of $S$ as a symbolic matrix that captures all unitary matrices to which $S$ can be instantiated while keeping the lhs and rhs equivalent. During matching, a symbolic subcircuit can match any subcircuit whose unitary matrix is represented by this symbolic matrix.
\newpage
\begin{wrapfigure}{r}{0.32\textwidth}
    \centering
    \( \begin{cases} |a|^2 + |b|^2 + |c|^2 + |c|^2 = 1 \\
|d|^2 + |m|^2 + |f|^2 + |f|^2 = 1 \\
|g|^2 + |h|^2 + |j|^2 + |k|^2 = 1 \\
a d^* + b m^* + c f^* + c f^* = 0 \\
... \end{cases} 
    \)
    \vspace{-1em}
    \caption{The constraints corresponding to Fig.~\ref{fig:symbolic-matrix}}
    \Description{A representative subset of the nonlinear equations imposed by the unitarity condition on the symbolic matrix S, including row normalization equations and orthogonality equations between rows.}
    \label{fig:symbolic_constraints}
    \vspace{-1em}
\end{wrapfigure}
Fig.~\ref{fig:symbolic-matrix} shows how the intermediate part of a symbolic rule (e.g., CX cancellation) can be represented as a $4 \times 4$ symbolic unitary matrix $S$, where $a, b, c, \ldots, j$ are complex variables satisfying the unitary constraint $SS^{\dagger} = I$. Expanding this constraint yields a set of non-linear equations, illustrated in Fig.~\ref{fig:symbolic_constraints}.
In the figure, $|x|^2$ denotes $x x^*$, where $x^*$ is the complex conjugate of $x$. To determine whether $S$ has valid solutions, \sys must find a symbolic matrix form (as in Fig.~\ref{fig:symbolic-matrix}) that satisfies both constraints: (1) $G_1; S; G_2 = G_1'; S; G_2'$ and (2) $S$ has a unitary solution.
However, enumerating circuits in (1) can cause combinatorial explosion, and solving non-linear equations in (2) is known to be computationally expensive~\cite{calculusofcomputation} and to scale poorly as the matrix dimension increases.
We now discuss how \sys avoids checking the non-linear equations and infers candidate solutions for $S$.

\subsection{Canonical Symbolic Rules and Property Grouping}
Having defined symbolic rules and constraints, we now describe how \sys synthesizes symbolic rules. There are two key challenges. (1) Unlike concrete-rule synthesis, where likely equivalent concrete terms can be grouped efficiently using probabilistic equivalence checking, symbolic terms are hard to group because symbolic variables have infinitely many possible instantiations. (2) If there are $N$ concrete terms, checking whether each candidate admits a valid symbolic constraint requires considering an exponential number of candidates. For example, for two concrete candidate terms $G_1; G_2; G_3$ and $G_4; G_5$, we need to insert a symbolic variable at every position and produce an exponentially large number of symbolic-rule candidates:
$G_1; S; G_2; G_3 = G_4; S; G_5$, $G_1; G_2; S; G_3 = G_4; G_5; S$, etc. Checking every candidate to determine whether such an $S$ exists is intractable.
We now describe how \sys addresses these challenges.

\subsubsection{\textbf{Canonical Form of Symbolic Rules}}
Given a candidate symbolic rule $G_1; S; G_2 = G_1'; S; G_2'$, where $S$ is a symbolic variable representing a subcircuit satisfying the constraint $M$, we describe how \sys derives the symbolic-matrix constraint or discards the candidate if no valid $S$ exists. We show that every candidate of this form can be reduced to, and derived from, a canonical-form representation $L; S = S; R$.
\begin{theorem}
\label{lem:canonical}
Given a gate set $A$, where each gate has a finite sequence of gates in $A$ that implements its inverse, and two circuits with a symbolic gate, $G_1; S; G_2$ and $G_1'; S; G_2'$, where $S$ is a variable that represents subcircuits satisfying the constraint $M$ and each $G_i$ is a concrete circuit, there exist circuits $L$ and $R$ implemented over $A$ such that $\llbracket L \rrbracket = \llbracket G_1 \rrbracket \llbracket G_1'\rrbracket^{-1}$ and $\llbracket R \rrbracket = \llbracket G_2 \rrbracket^{-1} \llbracket G_2' \rrbracket$. $G_1; S; G_2$ and $G_1'; S; G_2'$ are equivalent iff $L; S$ and $S; R$ are equivalent.
\end{theorem}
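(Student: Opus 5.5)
The plan is to reduce everything to matrix identities and then realize the required matrices as circuits over $A$. Throughout I use the convention from the background section that sequential composition reverses order in the semantics: $\llbracket C_1; C_2 \rrbracket = \llbracket C_2 \rrbracket \llbracket C_1 \rrbracket$. Equivalence means equality up to a global phase $e^{i\phi}$. I read the statement pointwise in $S$: for every instantiation of $S$ by a concrete subcircuit $C$ satisfying $M$, the two equivalences hold or fail together. So I fix an arbitrary such $C$ and write $U = \llbracket C \rrbracket$, which is unitary and in particular invertible.

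\textbf{Constructing $L$ and $R$.} Given a concrete circuit $G = g_1; \dots; g_k$ over $A$, define $\mathrm{inv}(G) = h_k; \dots; h_1$. Here each $h_j$ is the finite sequence in $A$ that, by hypothesis, implements $g_j^{-1}$. By induction on $k$ and the composition rule, $\llbracket \mathrm{inv}(G) \rrbracket = \llbracket G \rrbracket^{-1}$. If the inverse implementations are only exact up to a phase, the same identity holds up to the product of those phases, which is harmless for equivalence up to phase. Then set
\[
L = \mathrm{inv}(G_1'); G_1, \qquad R = G_2'; \mathrm{inv}(G_2).
\]
These give $\llbracket L \rrbracket = \llbracket G_1 \rrbracket \llbracket G_1' \rrbracket^{-1}$ and $\llbracket R \rrbracket = \llbracket G_2 \rrbracket^{-1} \llbracket G_2' \rrbracket$, as the statement requires.

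\textbf{The equivalence.} The lhs $G_1; C; G_2$ has semantics $\llbracket G_2 \rrbracket U \llbracket G_1 \rrbracket$, and the rhs has semantics $\llbracket G_2' \rrbracket U \llbracket G_1' \rrbracket$. So the original rule holds iff there is a $\phi$ with
\[
\llbracket G_2 \rrbracket U \llbracket G_1 \rrbracket = e^{i\phi} \llbracket G_2' \rrbracket U \llbracket G_1' \rrbracket .
\]
Multiplying on the left by $\llbracket G_2 \rrbracket^{-1}$ and on the right by $\llbracket G_1' \rrbracket^{-1}$ gives
\[
U \llbracket G_1 \rrbracket \llbracket G_1' \rrbracket^{-1} = e^{i\phi} \llbracket G_2 \rrbracket^{-1} \llbracket G_2' \rrbracket U .
\]
This is exactly $\llbracket L; C \rrbracket = e^{i\phi} \llbracket C; R \rrbracket$. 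Multiplying back by the same invertible matrices gives the converse, with the same $\phi$, since scalars commute with every matrix. Both directions therefore hold for every admissible $C$, which proves the iff. In particular, the constraint $M$ on $S$ is the same set of unitaries for both forms.

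\textbf{Main obstacle.} The mathematical content is light. The real risk is bookkeeping:
\begin{itemize}
\item getting the order of composition right, so that $L$ multiplies on the right of $U$ and $R$ on the left, matching $\llbracket L; S \rrbracket = \llbracket S \rrbracket \llbracket L \rrbracket$;
\item tracking global phases consistently through the inverse construction.
\end{itemize}
I would double-check the order against a small example, such as the CX case from Example~\ref{ex:canonical-cx}: there $G_1 = G_2 = cx$ and $G_1', G_2'$ are empty, so $L = R = cx$.

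A remark for the later derivability claims, not needed for the iff: the rule $G_1; S; G_2 = G_1'; S; G_2'$ is recovered syntactically from $L; S = S; R$. Append $\mathrm{inv}(G_1')$ on the left and $G_2$ on the right of both sides. The resulting circuits have semantics $\llbracket G_2 \rrbracket U \llbracket G_1 \rrbracket$ and $\llbracket G_2' \rrbracket U \llbracket G_1' \rrbracket$, so they coincide with the two sides of the original rule once the concrete cancellations $G; \mathrm{inv}(G) = \mathrm{inv}(G); G = \mathrm{empty}$ are available.
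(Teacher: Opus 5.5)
Your argument is correct and follows the paper's own route. You realize the inverses using the gate-inverse hypothesis and append them to both sides (equivalently, multiply by $\llbracket G_2\rrbracket^{-1}$ on the left and $\llbracket G_1'\rrbracket^{-1}$ on the right), then get the converse by multiplying back; you simply track composition order and global phase more carefully than the paper's one-line sketch does.

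There is one slip, in your closing remark rather than in the iff itself. To recover $G_1;S;G_2 = G_1';S;G_2'$ from $L;S = S;R$ you must prefix $G_1'$, not $\mathrm{inv}(G_1')$. Only then does $G_1';\mathrm{inv}(G_1');G_1$ cancel to $G_1$, and $\llbracket G_2\rrbracket\llbracket R\rrbracket U\llbracket G_1'\rrbracket$ equal $\llbracket G_2'\rrbracket U\llbracket G_1'\rrbracket$.
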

\begin{proof}
    If each gate has a finite inverse implementation in $A$, there is always a finite circuit sequence that implements the inverse of $G_1$ and $G_2$. We can left-append a circuit that implements $\llbracket G_1 \rrbracket^{-1}$ and right-append a circuit that implements $\llbracket G_2 \rrbracket^{-1}$ to both sides of the equation to obtain the canonical form. The reverse direction is similar.
\end{proof}
We call $(\llbracket L \rrbracket, \llbracket R \rrbracket)$ the canonical form of $G_1; S; G_2 = G_1'; S; G_2'$. A canonical form is valid if the constraint $M$ on $S$ contains at least one solution such that the rule's lhs and rhs are equivalent. The rule $L;S = S;R$ is one circuit-level representation of this canonical form, which we call a canonical symbolic rule. Since each matrix can have many equivalent circuit representations, we discuss below how \sys enumerates only one representation for each canonical form.

The theorem's premise that each gate has a finite inverse implementation in the gate set holds for gate sets that arise in practice, including native gate sets for real hardware such as IBM~\cite{qiskit}, IonQ~\cite{ionq2022native}, Rigetti, and others, as well as standard textbook gate sets such as Nam~\cite{nam2018automated} and Clifford+T~\cite{Gottesman_1998}. This theorem allows \sys to enumerate only canonical symbolic rules, where $S$ appears at the two ends of the rule and on opposite sides (i.e., $L;S$ on the lhs and $S;R$ on the rhs). This significantly reduces the symbolic-rule search space. After generating a canonical symbolic rule represented as $L; S = S; R$ with symbolic variable $S$, we can derive rules of the form $G_1; S; G_2 = G_1'; S; G_2'$ by left-appending subcircuits that implement the unitary matrix $A$ and right-appending subcircuits that implement the unitary matrix $B$ to both sides, such that $G_1 = AL$ and $G_2 = RB$.
\begin{example}
\label{ex:canonical}
The RZ-merging symbolic rule: $rz(\gamma)\ q0; S; rz(\theta)\ q1 = S; rz(\gamma + \theta)\ q1$
has the canonical form:
$$ rz(\gamma)\ q0; S = S; rz(\gamma)\ q1$$
\end{example}
We can derive the RZ-merging symbolic rule by right-appending $rz(\theta)\ q1$ to both sides of the canonical form and then using a simple concrete rule in Fig.~\ref{fig:rewrite}(c) to merge two RZ gates on the right-hand side.
\subsubsection{\textbf{Enumerate Candidates}}
Even though we have reduced the search space of symbolic rules to canonical form, it is still intractable to enumerate all possible candidates for $L,R$. The representation of a canonical form is not unique because circuits $L$ and $R$ can be represented by many equivalent circuits in an equivalence class. \sys further reduces the search space by enumerating one concrete circuit to represent $L$ or $R$ from each equivalence class generated during concrete-rule synthesis and appending a symbolic variable $S$ to it. Thus, for each canonical form, we enumerate only one representation. The intuition is that, while concrete rules already capture the equivalences between concrete circuits, we need to select only one representative from each equivalence class to serve as the concrete part of a canonical symbolic rule.

\subsubsection{\textbf{Group Symbolic Terms}}
For concrete terms, we can efficiently group circuits approximately using PIF. However, for symbolic terms, because of the infinite number of possible instantiations of symbolic gates, it is hard to group symbolic terms into equivalence classes. \sys groups symbolic terms with a variable $S$ using necessary conditions for the existence of a solution. If the concrete components $L$ and $R$ are placed in the same group, there may exist an instantiation of $S$ satisfying $L;S = S;R$; if they are placed in different groups, no such instantiation exists.
\begin{lemma}
\label{lem:eigen}
    Given unitary matrices $L_m, R_m$, there exists a unitary matrix $S$ such that $SL_m = R_mS$ if and only if $L_m$ and $R_m$ have the same eigenvalues, and each eigenvalue appears the same number of times (i.e., has the same multiplicity) in both matrices.
\end{lemma}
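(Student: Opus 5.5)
The plan is to reduce both directions to the spectral theorem for unitary (more generally, normal) matrices. First, observe that $SL_m = R_mS$ with $S$ unitary is equivalent to $R_m = S L_m S^{-1} = S L_m S^{\dagger}$. So the statement says exactly this: two unitaries are unitarily similar iff they have the same eigenvalues counted with multiplicity.

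For the ``only if'' direction, assume a unitary $S$ with $SL_m = R_mS$ exists. Then $R_m = SL_mS^{-1}$, so the characteristic polynomials agree:
\[
\det(\lambda I - R_m) = \det\bigl(S(\lambda I - L_m)S^{-1}\bigr) = \det(\lambda I - L_m).
\]
Equal characteristic polynomials give the same eigenvalues with the same algebraic multiplicities. For a unitary matrix these coincide with the geometric multiplicities, since unitaries are diagonalizable. This direction is routine and does not even use unitarity of $S$.

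For the ``if'' direction, apply the spectral theorem. Every unitary matrix is normal, so there are unitaries $U, V$ and diagonal matrices $D_L, D_R$ with $L_m = U D_L U^{\dagger}$ and $R_m = V D_R V^{\dagger}$. The diagonal entries are the eigenvalues, each repeated according to multiplicity. Because $L_m$ and $R_m$ have the same eigenvalues with the same multiplicities, $D_R$ is a reordering of $D_L$: there is a permutation matrix $P$ with $D_R = P D_L P^{\top}$, and permutation matrices are unitary. Now set $S = V P U^{\dagger}$. This is a product of unitaries, hence unitary, and
\[
S L_m = V P U^{\dagger} U D_L U^{\dagger} = V P D_L U^{\dagger} = V D_R P U^{\dagger} = R_m S .
\]
Here we used $P D_L = D_R P$, which follows from $D_R = P D_L P^{\top}$.

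The main subtlety, rather than a real obstacle, is justifying that $L_m$ and $R_m$ diagonalize by a \emph{unitary} change of basis. This is what lets us choose $S$ unitary and not merely invertible: a general similarity from Jordan form would only give an invertible $S$. It is handled by citing the spectral theorem for normal matrices, applied to $L_m^{\dagger}L_m = L_mL_m^{\dagger} = I$. A second point to state explicitly is that ``multiplicity'' can be read as either algebraic or geometric without ambiguity, because the two agree for diagonalizable matrices. This makes the multiset condition in the statement match the permutation condition on $D_L$ and $D_R$ used above.
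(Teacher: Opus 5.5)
Your proposal is correct and follows essentially the same route as the paper. The forward direction comes from $R_m = S L_m S^{\dagger}$ being a similarity, and the converse from unitarily diagonalizing both matrices via the spectral theorem and taking $S = V U^{\dagger}$; your permutation matrix $P$ just makes explicit the paper's implicit choice of a common diagonal $D$ for both decompositions.
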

\begin{proof}
If there exists a unitary matrix $S$ such that $SL_m = R_mS$, then multiplying both sides on the right by $S^{\dagger}$ gives
$SL_mS^{\dagger} = R_mSS^{\dagger} = R_m$.
Hence, $L_m$ and $R_m$ are unitarily similar matrices, so they have the same set of eigenvalues with the same multiplicities.

Conversely, if $L_m$ and $R_m$ have the same set of eigenvalues with the same multiplicities, then by the spectral theorem, there exist unitary matrices $U$ and $V$ such that
$L_m = UDU^{\dagger}$ and $R_m = VDV^{\dagger}$,
where $D$ is a diagonal matrix whose diagonal entries are the eigenvalues of $L_m$ and $R_m$.
Let $S = VU^{\dagger}$. Then
\[
SL_m = VU^{\dagger}UDU^{\dagger} = VDU^{\dagger} = VDV^{\dagger}VU^{\dagger} = R_mS.
\]
\end{proof}
\begin{lemma}
\label{lem:trace}
    Given unitary matrices $L_m$ and $R_m$, if there exists a unitary matrix $S$ such that $SL_m = R_mS$, then $L_m$ and $R_m$ have the same trace.
\end{lemma}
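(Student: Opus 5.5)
The plan is to reduce the statement to the cyclic invariance of the trace, using the unitarity of $S$ to invert it. Suppose a unitary $S$ satisfies $SL_m = R_mS$. Multiplying on the right by $S^{\dagger}$ and using $SS^{\dagger}=I$ gives $R_m = SL_mS^{\dagger}$, exactly as in the first half of the proof of Lemma~\ref{lem:eigen}.

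Next I take traces. The identity $\operatorname{tr}(AB)=\operatorname{tr}(BA)$ holds for square matrices of equal size. Applying it with $A = S$ and $B = L_mS^{\dagger}$ yields
\[
\operatorname{tr}(R_m)=\operatorname{tr}(SL_mS^{\dagger})=\operatorname{tr}(L_mS^{\dagger}S)=\operatorname{tr}(L_m),
\]
where the last step uses $S^{\dagger}S=I$.

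Alternatively, the claim follows directly from Lemma~\ref{lem:eigen}. That lemma shows $L_m$ and $R_m$ have the same eigenvalues with the same multiplicities. The trace of a matrix is the sum of its eigenvalues counted with multiplicity, for instance via the characteristic polynomial or the spectral decomposition $L_m=UDU^{\dagger}$. Hence the two traces agree.

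I do not expect a real obstacle here. The only points to handle carefully are that $S$, $L_m$ and $R_m$ are square matrices of the same dimension $2^q$, so that the cyclic trace identity applies, and that the unitarity of $S$ supplies the inverse $S^{\dagger}$. The converse does not hold, since equal trace does not imply equal spectrum, so the lemma is only a necessary condition. This is consistent with its use as a cheap grouping filter.
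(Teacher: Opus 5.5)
Your proof is correct, and your main argument takes a different route from the paper's. The paper uses the forward direction of Lemma~\ref{lem:eigen}: $SL_m=R_mS$ with $S$ unitary makes $L_m$ and $R_m$ unitarily similar, so they have the same eigenvalues with multiplicities. It then uses that the trace is the sum of the eigenvalues, which is exactly your ``alternative'' route.

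Your primary route is the chain $\operatorname{tr}(R_m)=\operatorname{tr}(SL_mS^{\dagger})=\operatorname{tr}(L_mS^{\dagger}S)=\operatorname{tr}(L_m)$. It avoids spectra entirely and needs only cyclicity of the trace and an inverse for $S$. So it holds verbatim for any invertible $S$ and arbitrary square $L_m,R_m$, without relying on the multiplicity bookkeeping in Lemma~\ref{lem:eigen}.

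The paper's route instead makes explicit that equal trace is a coarsening of equal spectrum. That is the relationship the algorithm relies on when it filters by trace before checking eigenvalues. Your closing remark that the converse fails matches the paper's description of the trace test as necessary but insufficient.
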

\begin{proof}
    The trace of a matrix is the sum of its eigenvalues. Thus, if $L$ and $R$ have the same set of eigenvalues with the same multiplicities, they must have the same trace.
\end{proof}
With the above two lemmas, \sys can group symbolic terms into classes, each of which contains symbolic terms that have the same trace and the same set of eigenvalues with the same multiplicities. This significantly reduces the search space of symbolic terms to be considered. Because computing traces is relatively efficient, we first group terms by their traces and then further test each term's concrete eigenvalues within each trace group. Because identical traces are a necessary but insufficient condition, they are used as a filter to eliminate a large number of non-existent symbolic rules early, before computing the eigenvalues of $L$ and $R$. For $L$ and $R$ that have symbolic variables such as angles, we again utilize probabilistic grouping by sampling concrete parameters to eliminate many non-existent symbolic rules whose lhs and rhs do not have equal traces. After that, for each rule in the group, we check whether the eigenvalues are symbolically equivalent with an SMT solver.

\subsubsection{Solving Symbolic Matrix}
Once we have grouped symbolic terms into classes, \sys can enumerate each circuit pair $(L, R)$ and construct solutions to the equation $\llbracket S \rrbracket \llbracket L \rrbracket = \llbracket R \rrbracket \llbracket S \rrbracket$. Solving this equation is equivalent to solving a set of linear equations and finding the nullspace of the matrix
$K(\llbracket R \rrbracket, \llbracket L \rrbracket) \;=\; I_n \otimes \llbracket R \rrbracket \;-\; \llbracket L \rrbracket^{\mathsf T} \otimes I_n$, if $\llbracket L \rrbracket \in \mathbb{C}^{n\times n}$ and $\llbracket R  \rrbracket \in \mathbb{C}^{n\times n}$, where $I_n$ is the identity matrix of size $n$.
\begin{example}
Given a canonical symbolic rule $cx\ q0\ q1; S = S; cx\ q0\ q1,$
\sys tries to solve for $S$ such that $\llbracket S \rrbracket \llbracket CX \rrbracket = \llbracket CX \rrbracket \llbracket S \rrbracket$. Solving the equation gives a list of basis matrices $\{B_1,...,B_k\}$ that correspond to the solution space of $S$. After assigning a coefficient variable $c_1..c_k$ to each basis matrix, the derived symbolic matrix is equivalent, modulo variable renaming, to the symbolic matrix in Fig.~\ref{fig:symbolic-matrix}.
\end{example}
\subsection{Synthesis Algorithm}
Algorithm~\ref{alg:symbolic} shows the algorithm \sys uses to synthesize symbolic rules. Line 2 shows that \sys first collects representative terms from each equivalence class produced by concrete-rule synthesis. ECs are computed by running equality saturation on the final concrete rule set $L$ and extracting a representative from each e-node. Line 3 groups the circuit terms $T$ by their concrete traces, with the symbolic angles in $L,R$ evaluated using $n$ sampled concrete values. This step eliminates many invalid candidates early. Lines 4--14 show that, for each grouped class $C$, \sys enumerates each pair of terms $(L, R)$ in $C$. Line 6 solves $\llbracket S \rrbracket \llbracket L \rrbracket = \llbracket R \rrbracket \llbracket S \rrbracket$ as a linear system. The result is a set of basis matrices whose linear combinations form the solution space of $S$. In Line 8, we validate the eigenvalues of $L$ and $R$ symbolically to determine whether a unitary solution for $S$ exists.
The grouping steps significantly reduce the number of systems of linear equations to be solved and eliminate a large number of non-existent symbolic rules early, avoiding the need to solve $S$'s unitary constraints directly; these constraints form a list of non-linear equations.
\begin{algorithm}[t]
\small
\caption{Symbolic Rule Synthesis}
\label{alg:symbolic}
\begin{algorithmic}[1]
\Function{SynthesizeSymb}{ECs}
\State $T \gets$ \{ Representative(EC) | for each equivalence class EC $\in$ ECs\}
\State $CC \gets$ group T by $(concrete\_trace(T))$ \Comment{CC is a set of equivalence classes}
\For{each class $C \in CC$}
\For{each pair of terms (L, R) in C}
\State $B \gets$ SolveIntertwiner(L, R) \Comment{Solve $\llbracket S \rrbracket \llbracket L \rrbracket$ = $\llbracket R \rrbracket \llbracket S \rrbracket$}
\If {$B \neq \{\}$}
\If {eigenvalues(L) = eigenvalues(R)} \Comment{validate the eigenvalues}
\State $SymbolicRules \gets SymbolicRules \cup \{(L, R, B)\}$
\EndIf
\EndIf
\EndFor
\EndFor
\EndFunction
\end{algorithmic}
\end{algorithm}
\noindent\textbf{Correctness} By Lemmas~\ref{lem:eigen} and~\ref{lem:trace}, for each pair of terms $L,R$, there exists a unitary matrix $S$ such that $S\llbracket L \rrbracket = \llbracket R \rrbracket S$ iff $eigenvalues(L)$ = $eigenvalues(R)$. For $L, R$ that have symbolic parameters such as angles, we first sample concrete angles and compute each circuit's concrete trace and then validate the equivalence of the eigenvalues symbolically in Line 8 using the Z3 SMT solver~\cite{z3} if $B$ is nonempty. Thus, all $B$ derived by solving $L;S = S; R$ have unitary solutions to the equations.

\noindent\textbf{Completeness}
Because we take one representative from each equivalence class generated by Algorithm~\ref{alg:concrete}, and Algorithm~\ref{alg:concrete} generates an $(n,q)$-complete non-derivable concrete rule set, the terms $(L,R)$ are canonicalized to representatives, and Algorithm~\ref{alg:symbolic} generates all canonical symbolic rules within qubit bound $q$ and with concrete parts within bound $n$.
It is not necessary to include all terms in an equivalence class because, once we have a canonical form represented as $L;S = S;R$, applications of concrete rules can always replace $L$ and $R$ with any of their previously generated equivalent terms. More intuitively, the concrete rules have already captured the equivalence relations between concrete terms.
\begin{lemma}
    For any symbolic rule $\phi$ and its canonical form represented as $L; S = S; R$, if $L$ and $R$ are within size $n$ and qubit bound $q$, then it can be derived from the canonical symbolic rules and concrete rules generated by Algorithms~\ref{alg:symbolic} and~\ref{alg:concrete}.
\end{lemma}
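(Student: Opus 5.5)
The derivation reduces $\phi$ to a synthesized canonical rule, with concrete rules bridging each gap. Let $\phi$ be $G_1;S;G_2 = G_1';S;G_2'$, and let $L;S = S;R$ be a circuit representation of its canonical form. By Theorem~\ref{lem:canonical}, $\llbracket L\rrbracket = \llbracket G_1\rrbracket\llbracket G_1'\rrbracket^{-1}$ and $\llbracket R\rrbracket = \llbracket G_2\rrbracket^{-1}\llbracket G_2'\rrbracket$. The proof has three steps: (i) show that a canonical rule with the same concrete semantics was actually produced by Algorithm~\ref{alg:symbolic}; (ii) show that concrete rules from Algorithm~\ref{alg:concrete} turn that produced rule into $L;S=S;R$ itself; (iii) rebuild $\phi$ from $L;S=S;R$ by appending contexts and rewriting the concrete parts.

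\textbf{Step (i): the rule is synthesized.} Since $L$ and $R$ lie within size $n$ and qubit bound $q$, each belongs to one of the equivalence classes enumerated in Algorithm~\ref{alg:concrete}. Let $\hat L$ and $\hat R$ be the representatives chosen in line 2 of Algorithm~\ref{alg:symbolic}. They satisfy $\llbracket \hat L\rrbracket=\llbracket L\rrbracket$ and $\llbracket\hat R\rrbracket=\llbracket R\rrbracket$, up to the global phase convention of Section~\ref{sec:rewrite-rule}. Because $\phi$ is a valid rule, its constraint $M$ admits a unitary solution $S$ with $S\llbracket L\rrbracket=\llbracket R\rrbracket S$. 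Lemma~\ref{lem:eigen} then gives equal eigenvalues with multiplicities, and Lemma~\ref{lem:trace} gives equal traces. Hence $\hat L$ and $\hat R$ fall in the same trace group, the intertwiner basis $B$ is nonempty (it contains $S$), and the eigenvalue check passes. So $(\hat L,\hat R,B)$ is emitted.

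The solution set of $\llbracket S\rrbracket\llbracket L\rrbracket=\llbracket R\rrbracket\llbracket S\rrbracket$ is exactly the nullspace spanned by $B$. Every subcircuit $C$ that satisfies $\phi$'s constraint also satisfies $C\llbracket L\rrbracket=\llbracket R\rrbracket C$, by the iff in Theorem~\ref{lem:canonical}. Therefore $C$ matches the symbolic variable of the emitted rule.

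\textbf{Step (ii): recover $L;S=S;R$.} By the $(n,q)$-completeness of Algorithm~\ref{alg:concrete}, the rewrites $L\leftrightarrow\hat L$ and $R\leftrightarrow\hat R$ are derivable from the concrete rules. Chaining them gives
\[
L;S \to \hat L;S \to S;\hat R \to S;R .
\]

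\textbf{Step (iii): rebuild $\phi$.} Left-append $G_1'$ and right-append $G_2$ to both sides, as in the paragraph after Theorem~\ref{lem:canonical}. This yields $G_1';L;S;G_2 = G_1';S;R;G_2$. It remains to turn $G_1';L$ into $G_1$ and $R;G_2$ into $G_2'$; check the order for the left side carefully, since semantically $\llbracket L\rrbracket\llbracket G_1'\rrbracket=\llbracket G_1\rrbracket$.

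\textbf{Main obstacle.} These two concrete rewrites may exceed the $(n,q)$ bound, because $G_1$ and $G_1'$ are not assumed to be small. Derivability then cannot simply cite completeness. I would try two things, in this order.

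First, read ``derived'' as closure under sequential composition with inverse circuits, the operation already used in Theorem~\ref{lem:canonical}. Under this reading the equation $L = G_1;(G_1')^{-1}$ is itself the definition of $L$, and expanding it into gate-level inverse cancellations only needs the small concrete rules $g;g^{-1}\to\text{empty}$. These are size-at-most-$n$ rules whenever the inverse implementations are short.

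Second, as a fallback, state that the lemma holds modulo concrete equivalences of $G_1, G_1', G_2, G_2'$, matching the informal claim that ``concrete rules have already captured the equivalence relations.'' This yields derivability exactly when those concrete equivalences are themselves within bounds.

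I would also note that global phase is harmless throughout: any phase on $L$ or $R$ shifts both sides of the symbolic equation uniformly.
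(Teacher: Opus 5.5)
\textbf{Steps (i)--(ii) match the paper; step (iii) is a real gap that your fixes do not close.} Your steps (i) and (ii) spell out the paper's own argument. Its proof is the single sentence that $\phi$ follows from Theorem~\ref{lem:canonical}. That sentence rests on the preceding completeness paragraph: one representative per class of Algorithm~\ref{alg:concrete}, every spectrally compatible pair of representatives emitted by Algorithm~\ref{alg:symbolic}, and concrete rules swapping $L$ and $R$ for their representatives.

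The obstacle you isolate in step (iii) is genuine, and the paper's proof passes over it. Theorem~\ref{lem:canonical} is only a semantic ``iff''. Turning it into a derivation needs the concrete rewrites $G_1';L \leftrightarrow G_1$ and $R;G_2 \leftrightarrow G_2'$, whose size is set by the contexts rather than by $n$. For instance, take $G_2, G_2'$ empty and $G_1 \equiv G_1'$ two large circuits. The canonical form has $L$ and $R$ both empty, well inside any bound. Yet deriving $\phi$ in effect requires a derivation of $G_1 \leftrightarrow G_1'$, which $(n,q)$-completeness does not supply.

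\textbf{Why the first fallback fails.} The hypothesis gives you some circuit $L$ of size at most $n$. It does not give you the literal concatenation $(G_1')^{-1};G_1$. That is the correct circuit order, since $\llbracket L\rrbracket = \llbracket G_1\rrbracket\llbracket G_1'\rrbracket^{-1}$ runs $(G_1')^{-1}$ first; your $G_1;(G_1')^{-1}$ has semantics $\llbracket G_1'\rrbracket^{-1}\llbracket G_1\rrbracket$. Passing from the bounded $L$ to that concatenation is itself an unbounded concrete equivalence. So gate-by-gate cancellation with $g;g^{-1}\to$ empty only helps under a stronger hypothesis: that $(G_1')^{-1};G_1$ and $G_2';(G_2)^{-1}$ already fit in the $(n,q)$ bound. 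Declaring the defining equation a free derivation step simply assumes the missing equivalence.

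\textbf{What is actually proved.} Your argument, and the paper's, establishes your second fallback. That is derivability modulo the concrete equivalences $G_1';L \equiv G_1$ and $R;G_2 \equiv G_2'$. This is exactly what the paper's remark that ``concrete rules have already captured the equivalence relations'' tacitly assumes.

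\textbf{Global phase in step (i).} Your closing remark on phase is not accurate. A phase on $L$ scales only the left side of $\llbracket S\rrbracket\llbracket L\rrbracket = \llbracket R\rrbracket\llbracket S\rrbracket$. Algorithm~\ref{alg:symbolic} groups by exact traces, solves that equation exactly, and compares exact spectra. Suppose $\llbracket\hat L\rrbracket = e^{i\alpha}\llbracket L\rrbracket$ and $\llbracket\hat R\rrbracket = e^{i\beta}\llbracket R\rrbracket$ with $e^{i\alpha}\neq e^{i\beta}$, which the up-to-phase classes of Algorithm~\ref{alg:concrete} allow. Then the spectra of $\hat L$ and $\hat R$ are rotated relative to each other, and the pair can be rejected. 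Step (i) therefore needs phase-consistent representatives, or an intertwiner equation with a free phase factor. The paper also leaves this point unaddressed.
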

\begin{proof}
It follows from Theorem~\ref {lem:canonical} that $\phi$ can be derived from the canonical rule.
\end{proof}
\noindent\textbf{Non-derivability}
We also show that each generated canonical symbolic rule is non-derivable from the other generated canonical symbolic rules.
\begin{definition}{(Symbolic Matrix Satisfiability)} Given a Symbolic Matrix $S$ with variables $v_1, v_2,...v_n \in \mathbb{C}$, we denote $M \models S$ if there exists a model $M$ that assigns complex numbers or complex variables to $v_1, v_2,...v_n$.
\end{definition}
A model $M$ can be viewed as a matrix containing concrete numbers and/or symbolic parameters. The matrix semantics of a quantum circuit is a unitary matrix that may also contain symbolic parameters (e.g., angles), which can be checked against $S$. If it is a model of $S$, then it satisfies the constraints of $S$.
\begin{definition}
    For any two symbolic matrices $S, S'$, we denote $S \subseteq S'$ if $M \models S$ implies $M \models S'$ for any model M.
\end{definition}
\begin{lemma}
\label{lem:uniquer}
Given a unitary matrix $L$ and a matrix $S$ with symbolic parameters, there exists a unique unitary matrix $R$, up to variable renaming, such that $SL = RS$.
\end{lemma}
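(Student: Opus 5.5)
The plan is to read $S$ as the family of its admissible instantiations, which are unitary because $S$ stands for a subcircuit. Existence then comes from conjugation, and uniqueness from cancelling $S$ on the right.

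\emph{Existence.} For every model $M \models S$ that the rule is meant to cover, $\llbracket S \rrbracket_M$ is unitary and hence invertible. So we set $R := S L S^{\dagger}$, viewed as a matrix whose entries are expressions in the variables of $S$ and in the angle parameters of $L$. Then
\[
R S = S L S^{\dagger} S = S L,
\]
so $SL = RS$. As a product of unitaries, $R$ is unitary under every such model. Equivalently, $R = S L S^{-1}$ is the matrix obtained by conjugating $L$ by $S$, which is the direction of Lemma~\ref{lem:eigen} in which $S$ witnesses the unitary similarity of $L$ and $R$.

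\emph{Uniqueness.} Suppose $R_1$ and $R_2$ both satisfy $SL = R_1 S = R_2 S$. Then $(R_1 - R_2) S = 0$. Multiplying on the right by $S^{\dagger}$ and using $S S^{\dagger} = I$, which holds under every admissible model (the unitarity constraints of Section~\ref{sec:constraint}), gives $R_1 = R_2$. Hence any two such $R$ agree entrywise as expressions, once the free symbols in their entries are identified. That identification is exactly what ``up to variable renaming'' absorbs: the same matrix may be written with different names for the coefficient variables $c_1,\dots,c_k$ of the nullspace basis, or for the angle parameters.

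\emph{Main obstacle.} The delicate step is justifying right-cancellation of a \emph{symbolic} $S$. As a raw linear combination $\sum_i c_i B_i$ of nullspace basis matrices, $S$ can be singular for some assignments, for example $c_i = 0$. I would handle this in two ways. First, restrict the notion of model to assignments satisfying $SS^{\dagger} = I$, which is precisely the set of subcircuits $S$ is allowed to match. Second, note that $\det S$ is then nonzero on every such model, so $S^{-1} = S^{\dagger}$ exists pointwise and the cancellation above is valid model by model. Equality of $R_1$ and $R_2$ on every model, up to renaming of symbols, then yields the stated uniqueness.
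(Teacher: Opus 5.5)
Your proposal is correct and follows essentially the same argument as the paper, whose appendix proof is not included in the source given here. Because $S$ stands for subcircuits, its admissible instantiations are unitary and hence invertible, so $R = SLS^{\dagger}$ gives existence, and right-multiplying $(R_1 - R_2)S = 0$ by $S^{\dagger}$ gives uniqueness. Restricting to unitary models, which excludes singular assignments such as all $c_i = 0$, makes explicit what the paper assumes when it calls $S$ a unitary symbolic matrix. For a model-independent $R$, the cancellation also needs at least one such model to exist, which the eigenvalue check justified by Lemma~\ref{lem:eigen} guarantees for generated rules.
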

\begin{remark}
The symmetric version also holds for a unique $L$ such that $SL = RS$, given $S$ and $R$. The lemma implies that the two unitary symbolic matrices $L$ and $S$ determine the $R$ such that $SL = RS$. The lemma is later used to prove the following theorem.
\end{remark}
\begin{theorem}
Any canonical symbolic rule $R_s$ is not derivable from the other generated canonical symbolic rules.
\end{theorem}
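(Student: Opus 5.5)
I would argue by contradiction, using Lemma~\ref{lem:uniquer} as the key tool. Suppose a generated canonical symbolic rule $R_s : L; S = S; R$, with symbolic matrix $S$ returned by \textsc{SolveIntertwiner}, were derivable from the other generated canonical rules (together with the concrete rules). The first step is to pin down what derivation means at the level of semantics. A derivation of a symbolic rule from symbolic rules instantiates the symbolic variable of each rule used by a subcircuit satisfying that rule's constraint. It also appends concrete prefixes and suffixes to both sides, and rewrites concrete parts by concrete rules; concrete rules preserve semantics. Because $S$ must be matched as a single block, the derivation has to use some other generated rule $L'; S' = S'; R'$ whose variable $S'$ absorbs $S$, i.e.\ $S \subseteq S'$. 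By Theorem~\ref{lem:canonical}, I would normalize any appended prefixes and suffixes away, so that the derived rule reduces to its canonical form $(\llbracket L' \rrbracket, \llbracket R' \rrbracket)$. This canonical form must coincide with $(\llbracket L \rrbracket, \llbracket R \rrbracket)$, or at least yield $S L = R S$ for all models of $S$.

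The second step is to show that the concrete parts must then agree. Every model $M \models S$ also satisfies $M \models S'$, so $M L' = R' M$ as well as $M L = R M$. I would work with the symbolic matrix $S$ itself, i.e.\ the generic element of the solution space, with the coefficients $c_1,\dots,c_k$ as variables. We then have $S L = R S$ and $S L' = R' S$. If the derivation connects the same lhs, so $\llbracket L \rrbracket = \llbracket L' \rrbracket$ up to concrete rewriting, then Lemma~\ref{lem:uniquer} gives $\llbracket R \rrbracket = \llbracket R' \rrbracket$ up to renaming. Symmetrically, by the remark after that lemma, a shared $R$ forces a shared $L$. So $(L,R)$ and $(L',R')$ have identical semantics. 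Algorithm~\ref{alg:symbolic} enumerates only one representative per equivalence class from Algorithm~\ref{alg:concrete}, so $L = L'$ and $R = R'$ as terms, and $R_s$ would be the very same rule. This contradicts the assumption that it came from the \emph{other} rules.

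The third step handles derivations built from chains of several rules, e.g.\ composing $L; S = S; M$ with $M; S = S; R$. Composition along a shared $S$ gives $S L = M S$ and $S M = R S$. I would argue that in a canonical rule the variable appears exactly once on each side, so a derivation must match it via one rule at a time. Intermediate steps therefore either use the same symbolic variable block, which collapses to the single-rule case via uniqueness, or only alter concrete parts, which is semantics-preserving. By induction on derivation length, every step fixes the canonical form $(\llbracket L \rrbracket, \llbracket R \rrbracket)$, and we again land on a rule with the same representatives.

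The main obstacle is the first step: making precise that a derivation of $R_s$ forces the use of a rule whose canonical form semantically equals that of $R_s$. In particular, one has to exclude chaining rules whose intermediate $M$ lies outside the representative set or which use the constraint of $S$ only partially. I would first try to define derivability formally as closure under instantiation, context extension, and concrete rewriting. Then I would prove an invariant: each such operation preserves the pair $(\llbracket L \rrbracket, \llbracket R \rrbracket)$ for the generic element of the symbolic matrix. Lemma~\ref{lem:uniquer} then closes the argument.
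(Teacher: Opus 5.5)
\textbf{Verdict: genuine gap.} Your skeleton is the same as the paper's: assume $R_s : L;S=S;R$ is reached by a chain of other canonical rules, note that each rule must match $S$ with $S\subseteq S_i$, and let Lemma~\ref{lem:uniquer} pin down the concrete parts. The trouble is the step you flag yourself. The invariant you propose to close it is false in the derivation model you set up.

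\textbf{Splitting $L$ across $S$.} Lemma~\ref{lem:uniquer} only says that $R_i$ is determined by $S$ and $L_i$; nothing forces $L_i = L$. Once a rule may be applied to a subterm (your context extension), $L$ can be carried across $S$ piece by piece: $L_2;L_1;S \to L_2;S;R_1 \to S;R_2;R_1$, with every step matching $S$ exactly. For instance, take the canonical rule $rz(\theta_1)\,q_0;\, rz(\theta_2)\,q_1;\, S = S;\, rz(\theta_1)\,q_0;\, rz(\theta_2)\,q_1$. For symbolic angles its solution space is the diagonal matrices. You obtain it by applying $rz(\theta)\,q_1;S=S;rz(\theta)\,q_1$ to the subterm $rz(\theta_2)\,q_1;S$, and then $rz(\theta)\,q_0;S=S;rz(\theta)\,q_0$ to $rz(\theta_1)\,q_0;S$. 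Both of these have solution spaces containing every diagonal matrix, so each step satisfies $S\subseteq S_i$. All three rules pass the checks of Algorithm~\ref{alg:symbolic}, and neither rule used has the target's canonical form. Your step-3 chain ($L;S=S;M$ then $M;S=S;R$) is not the dangerous case; it does not even compose, since after the first step $M$ sits to the right of $S$.

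\textbf{Block instantiation in step 1.} If $S'$ is instantiated by a block $A;S;B$, normalizing via Theorem~\ref{lem:canonical} gives the canonical form $(\llbracket A\rrbracket\llbracket L'\rrbracket\llbracket A\rrbracket^{-1},\ \llbracket B\rrbracket^{-1}\llbracket R'\rrbracket\llbracket B\rrbracket)$ relative to $S$, not $(\llbracket L'\rrbracket,\llbracket R'\rrbracket)$. For example, take $z;S'=S';z$ (equivalently $rz(\pi)$ up to phase) with $S' := h;S;h$, and wrap it in $h$ on both ends. Simplifying with the concrete rules $h;z;h\to x$ and $h;h\to \mathit{empty}$ yields $x;S=S;x$. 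The constraint check passes because $HSH$ commutes with $Z$ whenever $S$ commutes with $X$.

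\textbf{What this means for your proof.} No invariant of the kind you plan can be proved, because in your model (subterm rewriting plus block instantiation) the statement itself fails. The paper's argument works only under a much narrower notion of derivation, used implicitly: the chain $T_1,\dots,T_k$ rewrites the whole term at each step. So $T_1$ has lhs $L;S_1$, $T_k$ has rhs $S_k;R$, and every variable is matched to $S$ itself. The paper then squeezes all $S_i$ to $S$ and uses Lemma~\ref{lem:uniquer} to force each $T_i$ to be $R_s$ or its reverse.

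If you state that restriction explicitly as the definition of derivability, your argument closes:
\begin{enumerate}
\item The intermediate terms alternate between $X_i;S$ and $S;Y_i$.
\item Comparing consecutive steps on a unitary model of $S$ (one exists by the eigenvalue check) gives $\llbracket X_i\rrbracket=\llbracket L\rrbracket$.
\item Lemma~\ref{lem:uniquer} then gives $\llbracket Y_i\rrbracket=\llbracket R\rrbracket$.
\item Uniqueness of representatives makes $X_i=L$ and $Y_i=R$ as terms.
\item Each $T_i$ therefore solves the same linear system as $R_s$, so it has the same symbolic matrix and equals $R_s$, which is the contradiction.
\end{enumerate}
Your steps 2 and 3 are essentially this argument. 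They become valid only once the narrower definition rules out splitting $L$ across $S$ and absorbing concrete gates into the block matched by $S$.
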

% \begin{proof}
% Suppose a symbolic rule $T$, without loss of generality (the symmetrical rule has a similar proof), to be $L;S \rightarrow S;R$ is derivable by some other canonical rules. Then, there exists a sequence of rules $T_1...T_k$ not the same form as $T$ that can rewrite $L;S$ to $S; R$. Then, the lhs of $R_1$ must be of form $L; S_1$, where $S_1 \subseteq S$. The rhs of $R_k$ must be of the form $S_k; R$, where $S_k \subseteq S$. Notice that every $S_{i+1} \subseteq S_i$. This gives $S \subseteq S_k \subseteq S_{k-1}.... \subseteq S$, then all $S_i$ are actually equivalent to S modulo variable renaming. By lemma~\ref{lem:uniquer},
% every $T_i$ is the form $L;S \rightarrow S;R$ or $S;R \rightarrow L;S$, meaning that $R_i$ are just transforming L and R back and forth using the same rule, which contradicts the assumption that $R_i$ are not rule $L;S \rightarrow S; R$.
% \end{proof}
These results imply that we generate a compact symbolic-rule set that represents a large family of circuit equivalences, because symbolic variables can represent infinitely many unitary matrices.
\subsection{Checking Constraints}
Once we have a valid symbolic rule $R$, we match a subcircuit by syntactically matching the lhs pattern of the symbolic rule, producing a substitution that maps the variable $S$ to a concrete subcircuit $C$,
$\delta = \{S : C\}$.
The concrete part of the candidate must match the rule's concrete part, while the symbolic part $S$ can match any subcircuit. After matching, we check whether $\delta(S)$ satisfies the constraint on $S$ by testing whether the unitary matrix $M$ of $C$ is a model of symbolic matrix $S$, i.e., $M \models S$. \sys checks whether each entry of $M$ can be assigned consistently to variables in $S$ (the same variable in $S$ must map to the same value in $M$). If such an assignment exists, the constraint is satisfied. Since $M$ is already unitary (from the matrix semantics of the matched subcircuit), we do not re-check unitarity constraints. \sys avoids directly solving the non-linear equations in \autoref{sec:constraint} by grouping terms for which unitary solutions for $S$ exist during candidate enumeration and by solving linear equations to obtain a general symbolic matrix without explicit unitarity constraints. Intuitively, we do not need to solve for $S$ to check whether $M \models S$ for a given unitary matrix $M$.
\begin{figure}[h]
\begin{minipage}[b]{0.5\textwidth}
    \centering
    \begin{tikzpicture}
        \node (A) at (0,0) {
            \Qcircuit @C=1em @R=.7em {
            & \ctrl{1} & \gate{R_z^{\theta_1}} & \gate{R_z^{\theta_2}} & \ctrl{1}\\
            & \targ & \gate{X} & \qw & \targ \\
            %\gategroup{1}{3}{2}{4}{.7em}{--}
            }
        };
        \node (B) at (4,0) {
            \Qcircuit @C=1em @R=.7em {
            & \gate{R_z^{\theta_1}} & \gate{R_z^{\theta_2}} &\\
            & \gate{X} &\\
            }
        };
        \draw[->] (A) -- (B) node[midway, above] {};
        \draw[thick, rounded corners=3pt, draw=orange!80!black,
            fill=orange!30,
            fill opacity=0.25,
            line width=1pt]
    ($(A.north west)+(1cm, 0cm)$) --
    ++(2cm,0.0cm) -- ++(0,-1.6cm) -- ++(-2cm,0) -- cycle;
    \end{tikzpicture}
\end{minipage}
\caption{Example of matching a symbolic rule on a CX-framed subcircuit.}
\Description{A circuit containing two outer CX gates and an orange-highlighted intermediate subcircuit is matched by the CX-cancellation symbolic rule. Rewriting removes the two CX gates and retains the highlighted Rz and X gates.}
\label{fig:cx_example}
\end{figure}
\begin{example}
The symbolic rule $$cx\ q0\ q1; S; cx\ q0\ q1 \rightarrow S;$$
can match a circuit in Fig.~\ref{fig:cx_example}; the symbolic part $S$ is matched to the subcircuit in the orange box.
The concrete unitary matrix $M$ is:
$$ M = (I \otimes \llbracket X \rrbracket)(\llbracket RZ(\theta_2) \rrbracket \llbracket RZ(\theta_1)\rrbracket \otimes I)
= \begin{bmatrix} 0 & e^{-i (\theta_1 + \theta_2)/2} & 0 & 0 \\ e^{-i (\theta_1+ \theta_2)/2} & 0 & 0 & 0 \\ 0 & 0 & 0 & e^{i (\theta_1 + \theta_2)/2} \\ 0 & 0 & e^{i (\theta_1 + \theta_2)/2} & 0 \end{bmatrix} $$
It is straightforward to check that there exists an assignment of complex numbers to $S$ in Fig.~\ref{fig:symbolic-matrix} such that $M \models S$:
$$ a = 0, b = e^{-i (\theta_1 + \theta_2)/2}, c = 0, d = e^{-i (\theta_1 + \theta_2)/2}, m = 0, f = 0, g = 0, h = 0, k = e^{i (\theta_1 + \theta_2)/2}, j = 0 $$
Thus, the constraint is satisfied, and we can apply the symbolic rule to optimize the matched subcircuit.
\sys also supports matching a $k$-qubit symbolic matrix $S$ ($k < n$) with an $n$-qubit subcircuit $C$. For example, if $S$ is a $4 \times 4$ matrix that constrains only qubits $x_1,x_2$, and the matched subcircuit contains $n>2$ qubits $x_1,x_2,\ldots,x_n$, then \sys checks that $C$ transforms $x_1,x_2$ correctly regardless of other qubits' states. In addition, \sys checks that the remaining qubits yield consistent results for $L;C$ and $C;R$, since $S$ constrains only $x_1,x_2$. If interactions from $x_1,x_2$ change other qubits in a way that makes $L;C$ and $C;R$ inconsistent, the symbolic rule is not applicable. 
\end{example}

\subsection{Anchoring Canonical Rules}
\label{sec:anchoring}
Most canonical symbolic rules are size-preserving and are useful when we want to move $L$ and $R$ around the symbolic part $S$. The $L$ part can move to the right side of $S$ and become $R$. However, size-reducing rules can sometimes transform circuits directly, speeding up the search process, rather than first transforming the circuit using a size-preserving rule and then applying a size-reducing rule. We can do this by appending prefixes and suffixes to the canonical rule. However, naively enumerating all possible suffixes and prefixes will cause a combinatorial explosion, which is why we derive canonical rules instead of all possible rules in the first place. Canonical symbolic rules serve as the core, compact generative basis from which \sys can construct rules. They can be viewed as templates that can be instantiated into multiple concrete or symbolic transformations by prefixing or suffixing the circuits.
Yet, not all instantiations are meaningful. For example, left-appending or right-appending a circuit that implements the identity matrix $I$ to both sides produces a trivial symbolic rule.

To avoid generating redundant variants, \sys introduces a novel technique that selectively instantiates only those symbolic rules that are likely to "collaborate" with concrete rules to reduce gates by appending carefully chosen prefixes or suffixes to both sides.
\begin{figure}
\begin{mathpar}
\small
\inferrule[Init]
{ }
{\Delta := \Delta_{init}}

\inferrule[Append-1]
  {L;S;R = B_s \in \Delta \\ \mathit{suffix}\ L\ L_c \\ L_c \rightarrow R_c \in \Delta}
  {\Delta := \{L_c;s;R = \mathit{dropBack}(L_c,L);B_s\} \cup \Delta}
  \quad
\inferrule[Append-2]
  {A_s = L;S;R \in \Delta \\ \mathit{prefix}\ R\ L_c
  \\ L_c \rightarrow R_c \in \Delta}
  {\Delta := \{A_s;\mathit{dropFront}(L_c, R) = L; S; L_c\} \cup \Delta}

\inferrule[Append-3]
  {A_s = L;S;R \in \Delta \\ \mathit{suffix}\ L\ L_c \\ L_c \rightarrow R_c \in \Delta}
  {\Delta := \{\mathit{dropBack}(L_c, L);A_s = L_c; S; R\} \cup \Delta}
\quad
\inferrule[Append-4]
  {L;S;R = B_s \in \Delta \\ \mathit{prefix}\ R\ L_c \\ L_c \rightarrow R_c \in \Delta}
  {\Delta := \{L;S;L_c = B_s;\mathit{dropFront}(L_c,R)\} \cup \Delta}
\end{mathpar}
\caption{Inference rules for anchoring canonical rules}
\Description{Five inference rules define anchoring. The initialization rule starts from the generated rule set, and four append rules add prefixes or suffixes to canonical symbolic rules when those concrete fragments overlap a size-reducing concrete rule.}
\label{fig:anchor-rules}
\vspace{-1em}
\end{figure}
Fig.~\ref{fig:anchor-rules} shows the inference rules we use to derive anchored rules. $\Delta$ is the current rule context that stores all derived rules. $L,R,L_c,R_c$ are concrete subcircuits that contain no symbolic variables. $S$ is a symbolic subcircuit variable. $A_s$ and $B_s$ denote subcircuits that contain a symbolic variable. \textit{prefix/suffix} $X$ $Y$ indicates whether $X$ is a proper prefix/suffix of $Y$. \textit{dropfront/dropback} $X$ $Y$ indicates that, if $Y$ is a prefix/suffix of $X$, the operation drops $Y$ from the front/back of $X$ and yields the remainder. We assume that concrete rules $L_c \rightarrow R_c$ are size-decreasing. Rule \textsc{Init} initializes the rule set with all currently generated rules. \textsc{Append-2} first checks whether the concrete part $R$ of the rhs of the canonical symbolic rule is a prefix of the lhs $L_c$ of any concrete rule. If so, it can extend the rhs of the symbolic rule by transforming $R$ into $L_c$ and extend the lhs of the symbolic rule by adding, to the right of $A_s$, the part that remains after removing the prefix $R$ from $L_c$. The other \textsc{Append-x} rules are similar.
\begin{lemma}
   $\Delta$ always maintains a set of valid rules. 
\end{lemma}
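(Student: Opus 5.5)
The argument is an induction on the number of inference steps in Fig.~\ref{fig:anchor-rules} used to build $\Delta$. The invariant is that every equation $A = B$ in $\Delta$ is \emph{valid}. For a concrete rule this means $\llbracket A \rrbracket = \llbracket B \rrbracket$ up to global phase. For a symbolic rule it means that for every subcircuit $C$ satisfying the constraint on $S$, we have $\llbracket A[C/S] \rrbracket = \llbracket B[C/S] \rrbracket$. For brevity we then write $A = B$ for this semantic equality.

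\textbf{Base case (\textsc{Init}).} We show that $\Delta_{init}$ is valid. The concrete rules it contains were all checked by SMT inside \texttt{Choose} (the Correctness paragraph of Section 4). The canonical symbolic rules satisfy $\llbracket S\rrbracket\llbracket L\rrbracket = \llbracket R\rrbracket\llbracket S\rrbracket$ for every model of $S$: the basis $B$ returned by \texttt{SolveIntertwiner} spans the nullspace of $K(\llbracket R\rrbracket,\llbracket L\rrbracket)$, and any unitary $M \models S$ lies in that span. Lemma~\ref{lem:eigen} further shows that such unitary models exist, so these rules are not vacuous.

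\textbf{Inductive step.} Each \textsc{Append} rule is a congruence step, closed off by one use of a concrete rule, and we argue them in turn.

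\textsc{Append-2}. We have $A_s = L;S;R$ in $\Delta$, $R$ is a prefix of $L_c$, and $L_c \to R_c$ is in $\Delta$. Write $L_c = R;D$ with $D = \mathit{dropFront}(L_c,R)$. Sequential composition is matrix multiplication, which is a congruence. So right-appending $D$ to both sides of the valid rule gives $A_s;D = L;S;R;D = L;S;L_c$ for every admissible $C$. This is exactly the new rule.

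\textsc{Append-4}. This case is the same with the sides exchanged: from $L;S;R = B_s$ and $L_c = R;D$ we get $L;S;L_c = B_s;D$.

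\textsc{Append-1} and \textsc{Append-3}. These are the mirror images. Here $L$ is a suffix of $L_c$, so $L_c = E;L$ with $E = \mathit{dropBack}(L_c,L)$. Left-appending $E$ to both sides gives $L_c;S;R = E;B_s$ and $E;A_s = L_c;S;R$ respectively. The printed \textsc{Append-1} should be read with $S$ in place of the lowercase $s$.

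Since the new $\Delta$ is the old one plus a valid rule, validity is preserved.

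\textbf{Main obstacle: side conditions and the constraint on $S$.} The hard part is being precise about what the constraint on $S$ means after appending.

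First, the new rule must keep \emph{the same} symbolic matrix constraint on $S$ as its parent. Validity then holds pointwise for each instantiation $C$. Appending concrete circuits outside $S$ never changes which $C$ make the parent equation true.

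Second, when a $k$-qubit $S$ is matched against a wider subcircuit, the appended $D$ or $E$ may act on extra qubits. The new rule must therefore carry the same extra-qubit consistency side condition described in the matching example. We would state this explicitly and note that appending does not change it.

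Third, the premise $L_c \to R_c \in \Delta$ is used only to select the anchor. Its soundness (already covered by the invariant) is needed only when the concrete rule is later applied, not for the validity of the appended rule itself.

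Finally, soundness up to global phase is preserved because a phase $e^{i\theta}$ commutes with multiplication by $\llbracket D\rrbracket$ or $\llbracket E\rrbracket$.
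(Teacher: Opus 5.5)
Your proof is correct and takes essentially the same route the paper's anchoring argument calls for; the appendix proof itself is not included in the source, so that comparison is inferred. The route is induction over the derivation of $\Delta$: \textsc{Init} is justified by SMT validation of the concrete rules and the intertwiner construction for the canonical symbolic rules, and each \textsc{Append} step composes both sides of a valid rule with the same concrete circuit ($\mathit{dropFront}(L_c,R)$ on the right, or $\mathit{dropBack}(L_c,L)$ on the left), leaving the constraint on $S$ unchanged. One small addition: when the prefix/suffix check unifies symbolic angles, as in the $rz$ example with $\delta=\{\alpha:\gamma\}$, the appended circuit is the instance $\delta(D)$, and your argument applies verbatim to that instance.
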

\begin{lemma}
   If $R \in \Delta \setminus \Delta_{init}$ is applied to a circuit $C$ and results in $C'$, then a concrete rule in $\Delta$ can subsequently be applied to $C'$. We define the full set of anchored rules as $\Delta \setminus \Delta_{init}$ together with the canonical symbolic rules from which no additional rules are generated through anchoring.
\end{lemma}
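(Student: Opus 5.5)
The statement to prove is the last lemma: if a rule $R \in \Delta \setminus \Delta_{init}$ is applied to a circuit $C$, yielding $C'$, then some concrete rule in $\Delta$ can subsequently be applied to $C'$. The second sentence of the lemma is a definition and needs no proof.

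I will argue by induction on the derivation of $R$, that is, on the number of \textsc{Append} steps used to place $R$ into $\Delta$. Since every element of $\Delta \setminus \Delta_{init}$ is introduced by exactly one of \textsc{Append-1}--\textsc{Append-4}, I will case-split on the last rule used. The invariant I aim to establish is slightly stronger than the lemma: for each anchored rule, its right-hand side (the output of the rewrite) contains, as a contiguous concrete subsequence, the full left-hand side $L_c$ of the size-decreasing concrete rule $L_c \rightarrow R_c$ whose membership in $\Delta$ licensed the anchoring step.

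\textbf{Case \textsc{Append-2}.} The new rule is $A_s;\mathit{dropFront}(L_c,R) = L;S;L_c$, where $L_c \rightarrow R_c \in \Delta$ and $R$ is a proper prefix of $L_c$. Applying it left-to-right matches an instance of $A_s;\mathit{dropFront}(L_c,R)$ in $C$, with $S \mapsto D$ for some subcircuit $D$. Rewriting replaces this instance with $L;D;L_c$, so $C'$ contains $L_c$ verbatim as a contiguous block.

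The only subtlety is the circuit representation, since a circuit is a DAG rather than a word. I will note that the rewrite inserts $L_c$ as a contiguous block on its qubits. No other gate of $C'$ sits between the gates of that block, so the syntactic matcher of $L_c \rightarrow R_c$ fires on it. Hence the concrete rule applies to $C'$.

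\textbf{Case \textsc{Append-4}.} This case is identical: the right-hand side $B_s;\mathit{dropFront}(L_c,R)$ is obtained from $B_s = L;S;R$. Because $R$ is a prefix of $L_c$, the tail of $B_s$ followed by $\mathit{dropFront}(L_c,R)$ is exactly $L_c$.

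\textbf{Cases \textsc{Append-1} and \textsc{Append-3}.} These are the mirror images, using $\mathit{suffix}$ and $\mathit{dropBack}$. The new right-hand side begins with $L_c$, either as $L_c;S;R$ directly or as $\mathit{dropBack}(L_c,L);L$ with $L$ a suffix of $L_c$.

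For rules used in the reverse orientation, the same argument applies to the side produced by the rewrite. I will fix the convention that anchored rules are oriented in the direction written in Fig.~\ref{fig:anchor-rules}.

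The induction hypothesis is needed only when the premise rule ($A_s = L;S;R$ or $L;S;R = B_s$) is itself anchored. Even then, the argument above uses only the concrete fragment adjacent to $S$ in the premise's shape, not any property of how the premise was obtained. So the proof is essentially direct, and the induction merely guarantees that every element of $\Delta \setminus \Delta_{init}$ has one of these four shapes.

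\textbf{Expected obstacle.} The step I expect to be delicate is justifying that the textual concatenation $R;\mathit{dropFront}(L_c,R)$ really yields a matchable occurrence of $L_c$ in the circuit graph. There are two concerns. First, the qubit canonicalization of concrete rules from Algorithm~\ref{alg:concrete} must be instantiated consistently with the qubits of $R$. Second, $\mathit{dropFront}$ must be defined on the gate sequence, so that $R;\mathit{dropFront}(L_c,R)$ is syntactically identical to $L_c$. I will state both as explicit assumptions on the \textit{prefix}/\textit{suffix} predicates.
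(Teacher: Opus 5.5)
Your overall plan is the right one, and your \textsc{Append-2} and \textsc{Append-3} cases are fine. The plan is to inspect which \textsc{Append} rule introduced the anchored rule, and to show that the side produced by the rewrite contains the full left-hand side $L_c$ of the licensing concrete rule $L_c \rightarrow R_c \in \Delta$ as a contiguous block.

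The \textsc{Append-1} and \textsc{Append-4} cases, however, rest on a misreading of the premise $L;S;R = B_s \in \Delta$. There $B_s$ is the \emph{other side} of an existing rule: a symbolic circuit only semantically equal to $L;S;R$, not the term $L;S;R$ itself. Under your reading, the conclusion of \textsc{Append-4} would collapse to the trivial identity $L;S;L_c = L;S;L_c$. Since concrete rules fire by syntactic matching, semantic equality buys nothing. The term $B_s;\mathit{dropFront}(L_c,R)$ need not end in $L_c$, and $\mathit{dropBack}(L_c,L);B_s$ need not begin with $L_c$. In these two rules the copy of $L_c$ is assembled on the \emph{left}-hand side: $L_c;S;R$ and $L;S;L_c$, respectively.

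For a concrete failure, write $cx$ for $cx\ q_0\ q_1$ and take the canonical rule in the orientation $S;cx = cx;S$, so $L$ is empty, $R = cx$, $B_s = cx;S$. Take $L_c = cx;cx$. \textsc{Append-4} yields $S;cx;cx = cx;S;cx$. Applied left to right, it turns $D;cx;cx$ into $cx;D;cx$, where no $cx;cx$ block need occur. So under your stated convention (every anchored rule applied in the direction written in Fig.~\ref{fig:anchor-rules}), the lemma fails for \textsc{Append-1}/\textsc{Append-4} rules. Your remark that in the reverse orientation ``the same argument applies'' is also false. An \textsc{Append-2} rule applied right to left produces $A_s;\mathit{dropFront}(L_c,R)$, which need not contain any concrete left-hand side.

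The missing observation is structural. In every \textsc{Append} rule, it is the concretely framed side $L;S;R$ of the premise that gets extended to contain $L_c$. The opposite side ($A_s$ or $B_s$) merely receives the leftover fragment $\mathit{dropFront}(L_c,R)$ or $\mathit{dropBack}(L_c,L)$. This matches what the paper says for \textsc{Append-2} (``extend the rhs of the symbolic rule by transforming $R$ into $L_c$''). The lemma therefore has to be read, and proved, with each anchored rule oriented toward that extended side: left to right for \textsc{Append-2} and \textsc{Append-3}, right to left for \textsc{Append-1} and \textsc{Append-4}. Once you fix the orientation per rule this way, your contiguity argument applies verbatim in all four cases, since the produced side is $L;D;L_c$ or $L_c;D;R$. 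The concrete rule is still in $\Delta$ because $\Delta$ only grows. The induction is unnecessary, since only the last \textsc{Append} step matters.
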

\begin{example}
Consider the canonical symbolic rule from Example~\ref{ex:canonical}:
$rz(\gamma)\ q_0;\; S = S;\; rz(\gamma)\ q_1$. \sys first observes that the concrete suffix on the right-hand side, $rz(\gamma)\ q_1$, matches the prefix of the lhs of the concrete rule
$rz(\alpha)\ q_1;\; rz(\beta)\ q_1 \;\rightarrow\; rz(\alpha + \beta)\ q_1$. Note that matching the symbolic parameters creates a substitution $\delta = \{\alpha: \gamma\}$.
Hence, \sys can right-append $\delta(rz(\beta)\ q_1) = rz(\beta)\ q_1$ to both sides of the canonical symbolic rule, producing the new symbolic rule:
\begin{align}
rz(\gamma)\ q_0;\; S;\; rz(\beta)\ q_1 
&\;\rightarrow\; S;\; rz(\gamma)\ q_1;\; rz(\beta)\ q_1. \tag{5.1}
\end{align}
Although the new rule (5.1) only appends a suffix to both sides, it "enables" the application of a concrete rule in Fig.~\ref{fig:rewrite}(c), which can transform (5.1) to a size-reducing rule:
\begin{align}
rz(\gamma)\ q_0;\; S;\; rz(\beta)\ q_1 
&\;\rightarrow\; S;\; rz(\gamma + \beta)\ q_1. \tag{5.2}
\end{align}
Only the symbolic rule~(5.1) is retained as an anchored extension of the canonical rule, while the subsequent transformation~(5.2) can result from applying existing concrete rules during runtime optimization.
\end{example}
\section{Applying Rules}
We demonstrate that our generated rule sets can be easily integrated using standard techniques such as equality saturation~\cite{eqsat} and a search-based stochastic algorithm such as simulated annealing~\cite{simulated}. We also show in the next section that these simple techniques achieve good results.
Our generated rule sets are small and compact, allowing for the composition of a large number of rules. We easily integrate our concrete rules into existing equality-saturation engines~\cite{egglog,egg}. However, because of the limitations of equality saturation, an engine can run for only a small number of iterations before e-node explosion, when too many circuit terms are stored in the e-graph.
% The conventional matching algorithm matches subcircuits on a circuit graph by checking if the pattern and the subcircuits are isomorphic. We modeled quantum circuits as directly sequential lists of gates. The advantage of such representation is that the ASTs built from such representation are much simpler than circuit graphs and can be directly fed into the e-graph. One disadvantage is that it cannot represent parallel gates directly. However, we directly encode the commutation of parallel gates within a gate set into rewrite rules before optimizing circuits in that particular gate set. Thus, during optimization, the e-graph can run such commutation rules in a limited number of iterations to learn commutative gates within a limited window size, rather than learning all the commutative gates at once. 
Although the generated concrete rules can be easily used by equality saturation to compose larger rules, the e-graph is rarely able to saturate because it quickly fills with e-nodes when optimizing large circuits. In our experience, equality saturation can run for only up to 15 iterations before the number of e-nodes explodes. For a larger circuit, it can run for only around 5--10 iterations. Thus, we combine our concrete rules using equality saturation and apply symbolic rules using a simple stochastic algorithm that randomly selects and applies a symbolic rule. If the application reduces the current cost, it is accepted. Otherwise, it may be accepted with a small probability. Our intuition is that concrete rules can achieve global optimization within a limited circuit window (by running a limited number of equality-saturation iterations) and can quickly optimize short-distance subcircuits to an optimal size, whereas symbolic rules can achieve long-distance optimization by matching long-distance patterns and can serve as an exploration mechanism after concrete rules can no longer be applied. After an anchored rule is applied, concrete rules in an e-graph can again yield progress.
% In our evaluation, we set a window size of [10, 30] and show that it does not take long time for symbolic rules to match subcircuits and take effects to yield good optimization results.

Algorithm~\ref{alg:opt} shows the overall search-based optimization algorithm. The algorithm runs in a while loop until timeout. In each outer iteration, it runs 1 to $N$ iterations of equality saturation, alternating with the application of symbolic rules. In each symbolic step, it randomly selects a symbolic rule and matches it against the current search circuit. If a matched subcircuit is found, it applies the symbolic rule to obtain a new circuit. The \texttt{Match} method scans the circuit graph once per iteration (linear in gate count). $O(n^2)$ behavior arises only when many overlapping matches are attempted within the same window. However, with anchored symbolic rules, the added prefixes and suffixes limit a symbolic rule's matches to subcircuits that, after rewriting, enable a later size-reducing concrete rule. If the new circuit has a lower cost than the current circuit, it is accepted as the new search state. Otherwise, it may still be accepted with a small probability. It then builds an empty e-graph from the current circuit, saturates it with concrete rules for $k$ iterations, and extracts the lowest-cost circuit according to the cost model, which can be either the number of two-qubit gates or the total number of gates. After $N$ iterations, the algorithm starts over from $k = 1$. When timeout is reached, the best circuit encountered is returned. $T$ is a temperature hyperparameter that adjusts the acceptance probability~\cite{simulated}: larger $T$ yields higher acceptance probability.
We allow users to specify a window with a lower bound and an optional upper bound on the sizes of subcircuits that $S$ can match, such that a symbolic rule matches $S$ to a subcircuit larger than $l$ and smaller than $u$, while concrete rules find equivalences in subcircuits smaller than $l$. The lower bound ensures that symbolic and concrete rules do not overlap in their roles. This setting also provides flexible control: a larger or unlimited window allows $S$ to match larger subcircuits and prioritize broader transformations, while a smaller window focuses on smaller subcircuits so that each match runs faster and more rules can be explored. This does not mean that the system cannot optimize circuits larger than $u$: such optimizations can be performed through multiple symbolic rewrites.
\begin{figure}[t]
\centering
\resizebox{0.7\linewidth}{!}{
\begin{minipage}{0.8\textwidth}
\begin{algorithm}[H]
\small
\caption{Circuit Optimizer}
\label{alg:opt}
\begin{algorithmic}[1]
\Function{Optimize}{c, N, ConcreteRules, SymbolicRules, T}
\State $currentCircuit \gets c$
\State $bestCircuit \gets c$
\While {not Timeout}
\For {k in 1 to N}
\State $rule \gets$ RandomSelect(SymbolicRules)
\State $matchedSubcircuit \gets$ Match(currentCircuit, rule.lhs)
\If{$matchedSubcircuit \neq None$}
\State $newCircuit \gets$ ApplyRule(currentCircuit, rule, matchedSubcircuit)
\If{$Cost(newCircuit) < Cost(currentCircuit)$}
\State $currentCircuit \gets newCircuit$
\Else
\State $p \gets$ exp(-(Cost(newCircuit)-Cost(currentCircuit))/T)
\If{Random(0,1) < p}
\State $currentCircuit \gets newCircuit$
\EndIf
\EndIf
\EndIf
\If{$Cost(currentCircuit) < Cost(bestCircuit)$}
\State $bestCircuit \gets currentCircuit$
\EndIf
\State $eGraph \gets$ EmptyEGraph(currentCircuit)
\State $eGraph \gets$ eGraph.saturate(ConcreteRules, k)
\State $currentCircuit \gets$ ExtractBest(eGraph, CostModel)
\If{$Cost(currentCircuit) < Cost(bestCircuit)$}
\State $bestCircuit \gets currentCircuit$
\EndIf
\EndFor
\EndWhile
\State \textbf{return} bestCircuit
\EndFunction
\end{algorithmic}
\end{algorithm}
\end{minipage}
}
\Description{Pseudocode for circuit optimization. The algorithm repeatedly samples and applies symbolic rules using simulated-annealing acceptance, then builds an e-graph, saturates it with concrete rules for an increasing number of iterations, and extracts the lowest-cost circuit until timeout.}
\par
\end{figure}
\section{Implementation and Evaluation}
We implemented \sys in approximately 3.2k lines of Java and 1.0k lines of Python. The synthesis engine (including both concrete-rule and symbolic-rule synthesis) is implemented in Java. Our concrete rule term enumerator is built on Queso's enumerator~\cite{queso}. The constraint-checking module is implemented in Python using SymPy~\cite{sympy} to model and solve linear systems over symbolic matrices. For equality saturation, we use the open-source e-graph library egglog~\cite{egglog}.
In this section, we evaluate \sys and answer the following research questions.
\begin{itemize}
    \item (1) How do \sys's rules compare to rewrite rules generated by other quantum synthesis engines?
    \item (2) How do the formalized techniques help speed up the process of symbolic rule generation?
    \item (3) How does \sys, when used with simple optimization techniques, compare to existing state-of-the-art quantum circuit optimizers?
    \item (4) What effect do symbolic rules have on optimization? What is the difference among using only concrete rules, using canonical symbolic rules, and using anchored rules?
\end{itemize}
\paragraph{\textbf{Experimental Setup}} 
All experiments were run on an AMD Ryzen 9 9950X (16 cores) with 128GB RAM, running Ubuntu 22.04.
\subsection{Rewrite Rule Synthesis}
\begin{table}[h]
\centering
\small  % or \footnotesize / \scriptsize
\caption{Concrete rule-set size (number of rules) generated by \sys that fully derive prior rules}
\vspace{-1em}
\label{tab:size}
\begin{tabular}{r|r|r|r|r|r}
Gateset & Gates & \#Rules (\sys) & \#Rules (Prior) & $n$ & Time \\
\hline
ibm-eagle &\makecell{cx, rz \\ x, sx}& \textbf{179} & 4615 (Queso) & 5 & 770s\\
\hline
nam & \makecell{x h \\ rz cx}& \textbf{194} & 8816 (Quartz, Queso) & 5 & 341s\\
\hline
rigetti & \makecell{rx1 rx2 \\ rx3 rz cz} & \textbf{46} & 8773 (Quartz, Queso) & 5 & 484s\\
\hline
ion & \makecell{rx ry \\ rz rxx} &\textbf{180} & 11777 (Queso) & 3 & 237s\\
\hline
\end{tabular}
\vspace{-1em}
\end{table}
\paragraph{\textbf{Size Comparison.}}
We evaluate the concrete rules generated by \sys on four gate sets: (1) IBM-Eagle for IBM devices, (2) Rigetti's gate set, (3) an ion-trap gate set~\cite{ionq2022native}, and (4) the gate set of Nam et al.~\cite{nam2018automated}. The IBM and Rigetti gate sets target superconducting devices, which currently constitute the largest physically realized quantum hardware.
We conduct a size-comparison study to show that our concrete rule set (rule-set size = number of rules) remains compact while still deriving rules from prior work. We collect the gate-set grammars used by previous systems~\cite{queso, quartz} so that the generated \sys rules can cover their concrete rule spaces for a fair comparison. Table~\ref{tab:size} reports both rule-set size (number of rules) and the maximum rule-length bound used during synthesis. For the IBM-Eagle, Nam, and Rigetti gate sets, we use a maximum rule length of 5, whereas for the ion gate set, we use a rule length of 3. Quartz does not report rules for the IBM-Eagle and ion gate sets. We choose synthesis bounds that allow synthesis to finish within 15 minutes and show optimization effectiveness in the next subsection.

% Queso uses pure probabistic grouping without SMT equivalence checking, which can result false positive in low probability. QSymb uses Queso's method for grouping and validate the final rules in $L$ using SMT checking and constains an additional filtering process using an e-graph. Quartz enumerates all the rules within $(n,q)$ and use SMT checking.

\paragraph{Results.}
Across all four gate sets, \sys generates concrete rule sets that are 26x--191x smaller while still deriving all rules generated by Queso and Quartz.
% {\textbf{Results.}}~For all of 4 gate sets, \sys generates a much smaller concrete rule set and is able to derive all the rules generated by Queso and Quartz.

\paragraph{\textbf{Symbolic Rules.}}
\begin{wraptable}{r}{0.45\textwidth}
\vspace{-1.5em}
    \centering
    \caption{Anchoring cost of symbolic rules}
    \vspace{-1em}
    \begin{tabular}{|r|r|r|r|}
        \textbf{Gateset} & \textbf{\#Rules} & \textbf{Time (s)} & \textbf{size}\\
         ibm-eagle & 436 & 0.6 &5 \\
         nam & 484 & 1.1 & 5\\
         rigetti & 97 & 0.5 & 5\\
         ion & 607 & 0.8 & 3\\
    \end{tabular}
    \vspace{-1em}
    \label{tab:anchor}
\end{wraptable}
Because \sys synthesizes symbolic rules that differ from prior approaches, we compare end-to-end optimization performance against tools that use monomial symbolic rules in the next subsection.
Here, we evaluate how our formalized techniques accelerate symbolic-rule synthesis and filter large numbers of invalid candidates.
To measure the effect of property-based grouping, we compare \sys with grouping enabled and disabled. Since a symbolic gate can match arbitrarily long subcircuits, in practice we generate only canonical symbolic rules with short concrete parts, then use anchoring to construct longer, optimization-effective rules. 
In practice, we use a canonical symbolic rule size of 3 for the IBM, Nam, and Rigetti gate sets, and 2 for the ion gate set; all are synthesized within 5 minutes. Table~\ref{tab:grouping} reports the number of canonical symbolic rules and synthesis time with and without property-based grouping. We then anchor canonical rules to generate longer symbolic rules. Table~\ref{tab:anchor} reports the number of anchored symbolic rules and anchoring time.
\begin{table}[h]
    \centering
    \caption{Synthesis of canonical symbolic rules with and without property grouping}
    \vspace{-1em}
    \label{tab:grouping}
    \begin{tabular}{|r|r|r|r|r|r}
        Gateset & \#Rules & Cost w/ Grouping (s) & Cost w/o Grouping (s) & Speedup & size\\
         ibm-eagle & 101 & \textbf{2.4} & 40.9 & \textbf{17.0x} &3\\
         nam & 139 & \textbf{8.4} & 61.0 & \textbf{7.3x} &3\\
         rigetti & 81 & \textbf{19.0} & 199.5 & \textbf{10.5x} & 3\\
         ion & 59 & \textbf{2.4} & 27.5 & \textbf{11.5x} &2 \\
    \end{tabular}
    \vspace{-1em}
\end{table}
% \textbf{Results}. Grouping candidates by properties have significantly reduce the search space and achieve x speedup.
% \paragraph{Compare to General Rewrite Inference Tool}
% To further show the performance of \sys, we compare \sys with a general rewrite inference tool\cite{} that can generate nonderivable rewrite rules for any domain given its semantics and a grammar. We use the same grammar and semantics as \sys to generate rewrite rules using the general rewrite inference tool. We then compare the performance of the generated rules by both tools using equality saturation under the same grammar and semantics. Because it only generates concrete rules, we only compare the performance of concrete rule generation.

\paragraph{Results.}
Property-based grouping significantly reduces the search space for canonical symbolic-rule generation, yielding an average speedup of 11.6x. Without grouping, synthesis is inefficient because candidates from both sides of a rule must be enumerated and checked one by one. Table~\ref{tab:anchor} shows that anchoring runs in seconds, because we anchor only canonical rules that can enable a later concrete rewrite and we generate a small number of concrete rules.
\paragraph{\textbf{Summary of Questions 1 and 2.}}
The concrete rewrite rules generated by \sys are much smaller than those of prior quantum rewrite synthesizers, while still semantically covering their rules. Grouping by symbolic properties significantly accelerates symbolic-rule synthesis. Anchoring efficiently derives useful rules from the canonical rule set without exhaustive enumeration.
\subsection{Optimization Performance}
To evaluate symbolic-rule quality and application effectiveness, we conduct three experiments: (1) comparing full \sys (concrete + symbolic rules) against state-of-the-art optimizers; (2) comparing full \sys against two ablations (concrete-only and concrete + canonical symbolic rules); and (3) measuring the effect of anchoring on optimization performance.
% \paragraph{\textbf{Expectation}}
% \begin{itemize}
%     \item (1) We expect \sys to use symbolic rules and concrete rules to perform as good as or outperform other tools that use far more symbolic rules and concrete rules 
%     \item (2) We expect \sys that using symbolic rules and concrete rules perform better than using concrete rules.
% \end{itemize}

\paragraph{\textbf{Metrics}} We use two-qubit-gate reduction to evaluate \sys. We focus on two-qubit gates because, on NISQ hardware, they typically have much higher error rates than single-qubit gates. The metric is the percentage reduction in two-qubit-gate count relative to the original circuit. 
$$ \text{Percentage Reduction} = \frac{\text{Original Count} - \text{Optimized Count}}{\text{Original Count}} * 100\% $$
In addition, to evaluate practical quality, we compute circuit fidelity. The fidelity of a gate $g$ is $1 - error\_rate(g)$, and the fidelity of a circuit is $\prod_{i=1}^n fidelity(g_i)$. Gate error rates are taken from Qiskit's calibration data for IBM Washington~\cite{qiskit}.
% (1) We first evaluates how \sys performs in optimizing quantum circuits by only using concrete rules versus using both symbolic rules and concrete rules under same benchmarks. (2) We then counts the number of new long distance rules symbolic rules has discovered after equality saturation is run for concrete rules, meaning the number of equivalences that equality saturation hasn't found after n iterations. 
\paragraph{\textbf{Benchmarks}} We use benchmarks from prior rewrite-based optimizers~\cite{queso, quartz} and approximate optimization work~\cite{quest}. Our benchmark suite contains 135 circuits spanning near-term and long-term algorithms, including QAOA~\cite{farhi2014quantumapproximateoptimizationalgorithm}, VQE~\cite{Peruzzo_2014}, QPE~\cite{kitaev1995quantummeasurementsabelianstabilizer}, QFT~\cite{coppersmith2002approximate}, Grover~\cite{grover}, and Shor~\cite{shor}. It also includes benchmark families used by Queso and Quartz. All input circuits are decomposed into the target gate set. The circuits contain an average of 1371.8 gates; the maximum is 17438, and the minimum is 15. They use an average of 13.3 qubits; the maximum is 36, and the minimum is 4.
\paragraph{\textbf{Experiment Setup}} We set temperature $T=10.0$ and run $N=9$ equality-saturation iterations in Algorithm~\ref{alg:opt}. Each target has a 60-minute timeout. We evaluate on the IBM-Eagle and Nam gate sets, using all generated concrete rules and anchored symbolic rules, with 3 trials per benchmark. We set the symbolic-match window to $[10, \infty]$ so that symbolic and concrete rules do not overlap in the same local region. Each data point is averaged over 3 trials. We compare against state-of-the-art rewrite-based tools: Quartz~\cite{quartz}, Queso~\cite{queso}, Qiskit~\cite{qiskit}, and TKET~\cite{tket}. We also compare with Guoq~\cite{guoq}, which combines resynthesis and rewriting; for fairness, we compare against its rewrite engine. Guoq's rewrite engine is adapted from Queso and therefore uses a similar rule set. Among the compared tools, Queso and Guoq use monomial symbolic rules, Quartz uses concrete rules with symbolic parameters, and Qiskit/TKET use only concrete rules. We run Guoq, Quartz, and Queso using the settings and trial procedures reported in their papers. For Qiskit, we use optimization level 3. Table~\ref{tab:tools} summarizes each tool's approach and symbolic-rule support. Column "Used Symb" denotes whether the tool uses symbolic rules. Column "Synthesize Symb" denotes whether the tool itself synthesizes symbolic rules.
\paragraph{\textbf{Results.}}

\begin{figure}[h]
    \includegraphics[width=\textwidth]{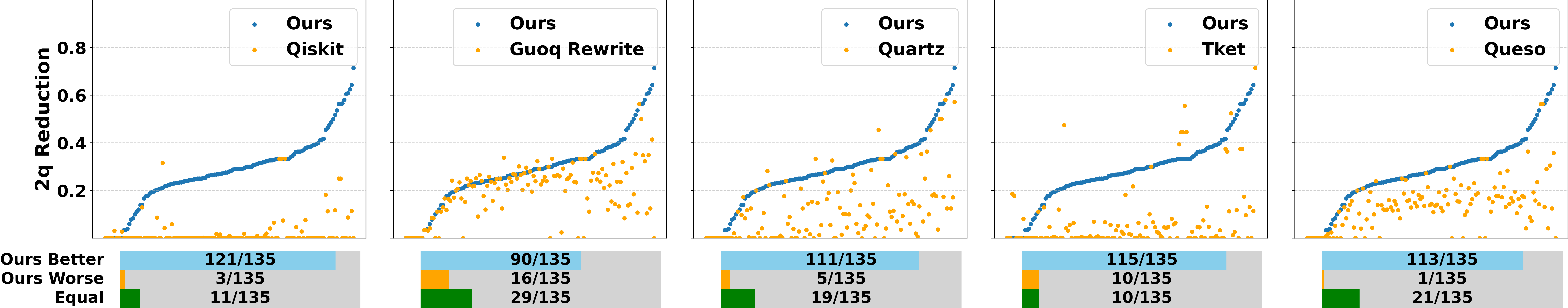}
    \includegraphics[width=\textwidth]{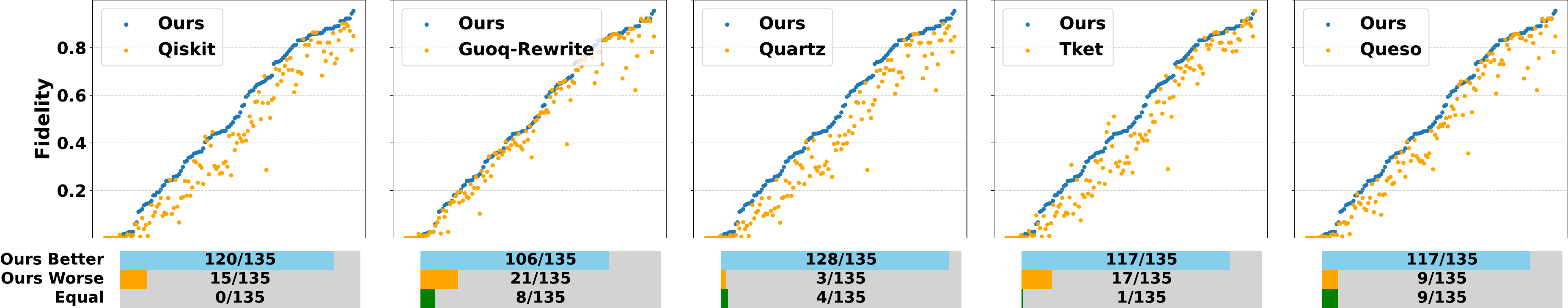}
    \caption{\sys versus state-of-the-art optimizers on IBM-Eagle gateset}
    \Description{Two benchmark-ordered comparison plots for the IBM-Eagle gate set. The first compares two-qubit-gate reduction and the second compares estimated circuit fidelity across \sys, Qiskit, Guoq, Quartz, TKET, and Queso. \sys achieves the strongest result on most benchmarks.}
    \label{fig:vs_sota}
    \vspace{-1em}
\end{figure}

\begin{figure}[h]
    \includegraphics[width=\textwidth]{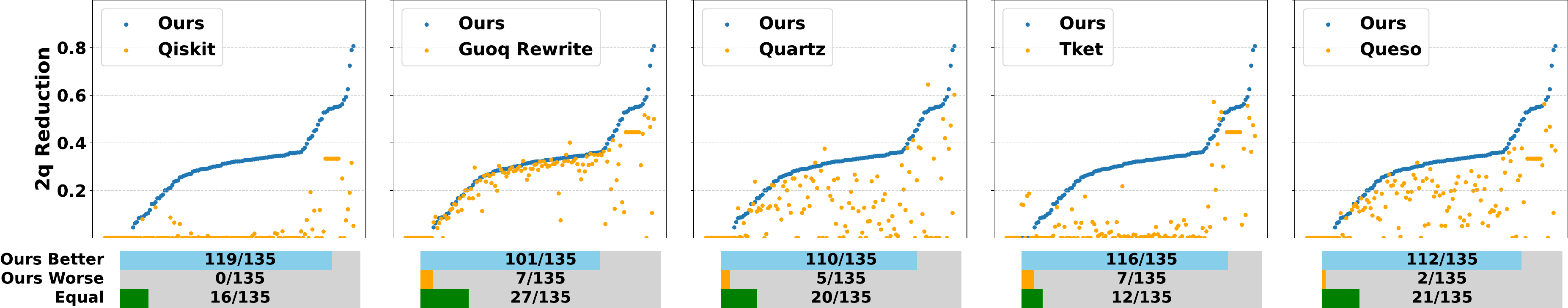}
    \caption{\sys versus state-of-the-art optimizers on Nam gateset}
    \Description{A benchmark-ordered plot comparing two-qubit-gate reduction on the Nam gate set across \sys, Qiskit, Guoq, Quartz, TKET, and Queso. \sys achieves the strongest reduction on most benchmarks.}
    \label{fig:vs_sota_nam}
    \vspace{-1em}
\end{figure}

\begin{table}[h]
    \centering
    \small
    \caption{Comparison of tools and their approaches}
    \vspace{-1em}
    \label{tab:tools}
    \begin{tabular}{c|c|c|c}
         \text{Tool} & \text{Opt Approach} & \text{Used Symb}  & \text{Synthesize Symb}\\
        \hline
         \sys & Eqsat \& Simulated Annealing & $\checkmark$ & full infinite space\\
         Guoq-Rewrite~\cite{guoq} & Simulated Annealing &  $\checkmark$ & $\times$\\
         Queso~\cite{queso} & Beam Search & $\checkmark$ & monomial finite space\\
         TKET~\cite{tket} & fixed passes & $\times$ & $\times$\\
         Qiskit~\cite{qiskit} & fixed passes & $\times$ & $\times$\\
         Quartz~\cite{quartz} & Beam Search & $\times$ & $\times$\\
    \end{tabular}
    \vspace{-1em}
\end{table}
Fig.~\ref{fig:vs_sota} shows two-qubit reduction and fidelity against state-of-the-art optimizers on the IBM-Eagle gate set. The x-axis orders benchmarks by \sys's result (from smaller to larger reduction/fidelity). We observe that \sys outperforms other optimizers on most benchmarks. In two-qubit reduction, \sys strictly outperforms Qiskit~\cite{qiskit}, Guoq~\cite{guoq}, Quartz~\cite{quartz}, TKET~\cite{tket}, and Queso~\cite{queso} on 90\%, 67\%, 82\%, 85\%, and 83\% of benchmarks, respectively. For fidelity, \sys strictly outperforms Qiskit~\cite{qiskit}, Guoq~\cite{guoq}, Quartz~\cite{quartz}, TKET~\cite{tket}, and Queso~\cite{queso} on 89\%, 78\%, 94\%, 87\%, and 87\% of the benchmarks. Search-based tools (Quartz, Queso, and Guoq) continue optimization until timeout, while fixed-pass tools (Qiskit and TKET) terminate after their pass sequence completes. The average runtimes for Qiskit and TKET are 9.3s and 12.1s, respectively. Fig.~\ref{fig:vs_sota_nam} shows two-qubit reductions for the Nam gate set across optimizers. Since Nam does not have real-hardware fidelity data, we show only two-qubit reduction. \sys strictly outperforms Qiskit~\cite{qiskit}, Guoq~\cite{guoq}, Quartz~\cite{quartz}, TKET~\cite{tket}, and Queso~\cite{queso} on 88\%, 74\%, 81\%, 86\%, and 82.9\% of the benchmarks, respectively.

\begin{wrapfigure}{r}{0.3\textwidth}
    \centering
    \begin{minipage}{0.29\textwidth}
    \includegraphics[width=\textwidth]{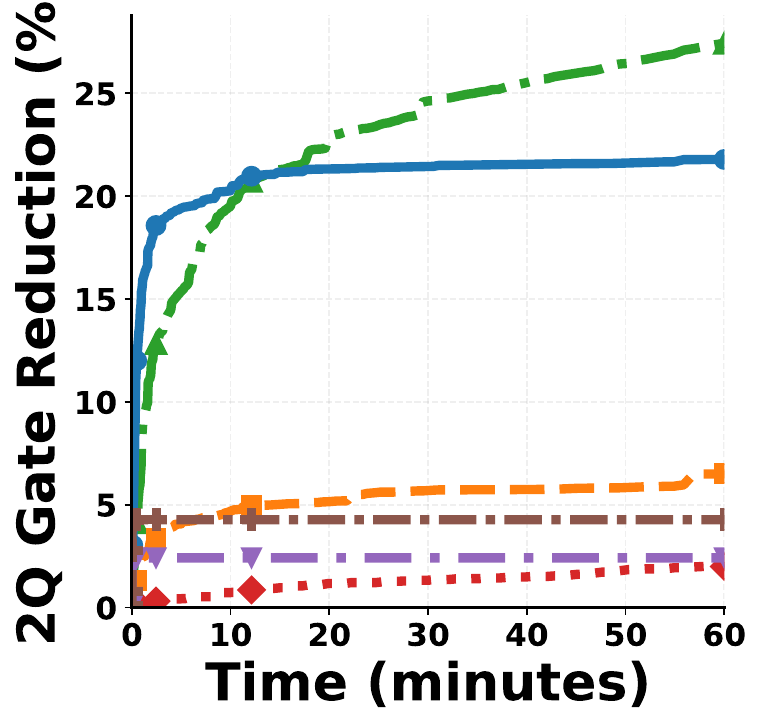}
    \label{fig:runtime}
    \end{minipage}
    \begin{minipage}{0.29\textwidth}
    \vspace{-1.6em}
    \includegraphics[width=\textwidth]{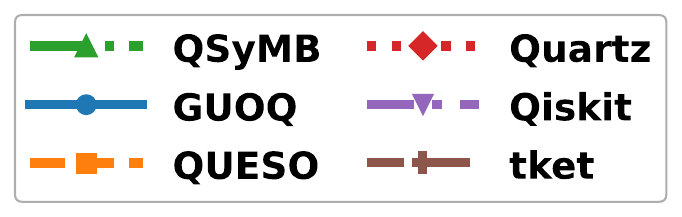}
    \label{fig:runtime_legend}
    \end{minipage}
    \Description{A line plot of average two-qubit-gate reduction over optimization time on IBM-Eagle benchmarks, with a separate legend. Fixed-pass tools stop early, while \sys continues improving and reaches the highest final reduction among the compared optimizers.}
\vspace{-1.5em}
\end{wrapfigure}
\textbf{Runtime.} The figure on the right shows a runtime comparison of the average two-qubit reduction against state-of-the-art optimizers on the IBM-Eagle gate set, computed as $1 - \frac{\mathit{mean}(\sum \mathrm{optimized\_2q})}{\mathit{mean}(\sum \mathrm{original\_2q})}$. The results for the Nam gate set are similar to those for IBM-Eagle, so we omit them for brevity.
As shown, \sys starts slower than Guoq in the first 10 minutes. However, as Guoq slows down, \sys continues to make progress, converges faster than Guoq, and reaches better final results. For fixed-sequence optimizers such as Qiskit and TKET, optimization terminates after a few seconds and the results then remain unchanged.
\sys performs significantly better than Queso and Quartz, both by converging faster and by achieving better final results. \sys reaches a final average two-qubit reduction rate of 27.44\% and average fidelity of 0.49 on IBM-Eagle; \sys reaches a final average two-qubit reduction rate of 29.95\% on the Nam gate set.
% Fig.~\ref{fig:vs_time} also records the average 2-qubit reduction rate over time for search-based optimizers (\sys, Guoq, Queso, and Quartz) over 10 minutes, since most optimization progress happens in the first 10 minutes and all tools change very little afterward. We measure 2-qubit reduction rate per minute. We observe that all tools begin to optimize more slowly after 5 minutes, but \sys maintains higher overall optimization quality.
\paragraph{Summary of Question 3}
By combining compact concrete rules, expressive symbolic rules for long-distance matching, and anchoring, \sys significantly outperforms state-of-the-art rewrite-rule-based optimizers in both convergence rate and final results. The runtime results show that \sys continues to make progress after ten minutes, in contrast to prior work, demonstrating the effectiveness of our synthesized optimizations.

\subsection{Ablation Study}
We further evaluate the impact of symbolic rules by comparing full \sys against two ablations: \sys with only concrete rules, and \sys with concrete rules plus canonical symbolic rules. To isolate the effect of our concrete-rule synthesis algorithm, we also compare the concrete rules generated by Algorithm~\ref{alg:concrete} against the same number of rules randomly selected from more than six thousand rules generated by Queso. We use the same temperature and iteration settings as in the previous experiment. Additional ablation studies are provided in Appendix C, including the effect of symbolic-rule window sizes and the impact of anchored rules versus canonical rules over time.

\begin{figure}[h]
    \centering
    % \begin{minipage}{0.38\textwidth}
    % \centering
    % \includegraphics[width=\textwidth]{figs/quality_vs_time.png}
    % \caption{\sys vs. search-based optimizers over 10 minutes}
    % \label{fig:vs_time}
    % \end{minipage}
    \begin{minipage}{0.66\textwidth}
    \includegraphics[width=\textwidth]{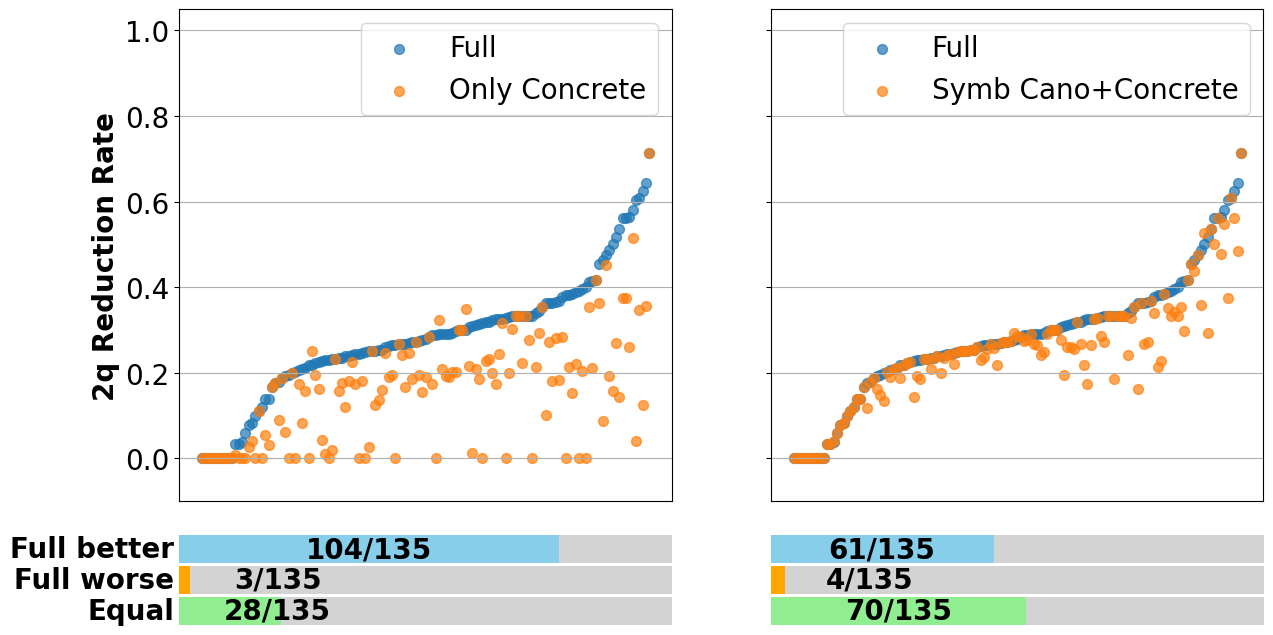}
    \caption{Full \sys vs. \sys with only concrete rules vs. \sys with canonical symbolic rules}
    \Description{Ablation plots compare full \sys with a concrete-only variant and with a variant using concrete plus canonical symbolic rules. Full \sys generally produces greater two-qubit-gate reduction, showing the benefit of symbolic rules and anchoring.}
    \label{fig:concrete vs symb}
    \end{minipage}
    \vspace{-1.2em}
\end{figure}
\textbf{Results.} Fig.~\ref{fig:concrete vs symb} compares full \sys against two ablations: \sys with only concrete rules, and \sys with concrete rules plus only canonical symbolic rules. The results show that full \sys performs as well as or better than the concrete-only variant in 97\% of cases and is strictly better in 78\% of cases. With anchored symbolic rules---formed by appending prefixes and suffixes to many canonical rules---full \sys further improves optimization, performing as well as or better than the canonical-symbolic-plus-concrete variant in 97\% of cases, and strictly better in 31\% of cases. In the best case, full \sys improves two-qubit-gate reduction by 21\%.
Fig.~\ref{fig:concrete vs random} compares \sys using the concrete rules generated by Algorithm~\ref{alg:concrete} against an equal number of rules randomly selected from the Queso rule set~\cite{queso} of over six thousand rules.
For a fair comparison, we perform three trials. In each trial, we independently sample a new random rule set with the same number of rules. Fig.~\ref{fig:concrete vs random} reports the mean across the three trials. The randomly selected rules make almost no optimization progress because each set consists of 179 rules randomly selected from more than six thousand and is unlikely to derive the complete equivalences within the $(n,q)$ bounds.
\begin{wrapfigure}{r}{0.4\textwidth}
    \centering
    \begin{minipage}{0.38\textwidth}
    \includegraphics[width=\textwidth]{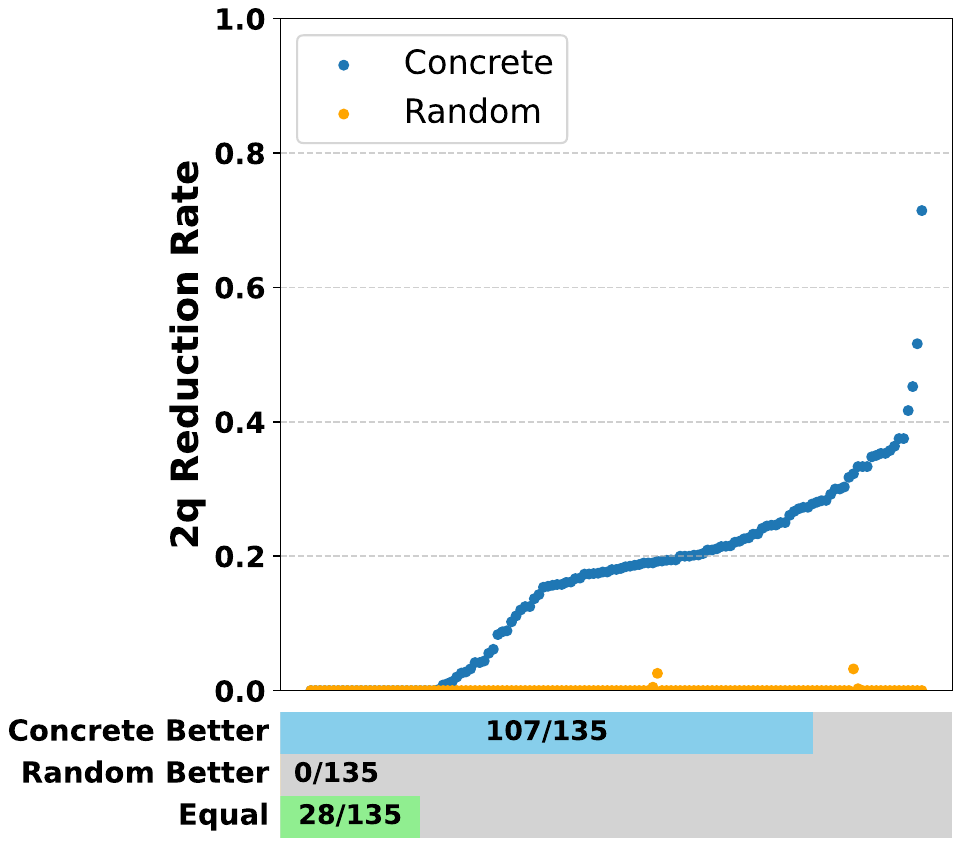}
    \caption{Concrete \sys vs. the same number of randomly selected Queso rules}
    \Description{A benchmark-ordered comparison of two-qubit-gate reduction using \sys's compact synthesized concrete rule set versus an equally sized random subset of Queso rules. The synthesized set consistently makes substantially more optimization progress.}
    \label{fig:concrete vs random}
    \end{minipage}
    \vspace{-1.5em}
\end{wrapfigure}
\paragraph{Summary of Question 4} The results show that the combined approach achieves significantly better optimization than either component alone. Rule anchoring further improves optimization beyond what is achieved by canonical symbolic rules and concrete rules alone. The concrete-vs-random comparison also shows the importance of Algorithm~\ref{alg:concrete}: a compact, non-derivable, and $(n,q)$-complete concrete rule set is substantially more effective than an equally sized random subset of a much larger rule set.
\section{Discussion and Future Work}
As in prior bounded rewrite-rule synthesizers~\cite{quartz, queso}, the search space grows exponentially with qubit count and rule size. In practice, however, we choose rule sizes that allow optimization to converge quickly while still producing strong results. Thus, like Queso~\cite{queso} and Quartz~\cite{quartz}, \sys restricts synthesis to rules over at most 3 qubits. For example, Quartz reports that larger rule sizes ($n$) or rule qubit counts ($q$) do not necessarily yield better optimization results and can sometimes degrade performance. For \sys, most larger-size concrete rules are eliminated by the e-graph, so searching for larger-size rules that will mostly be eliminated is not an effective strategy. Our evaluation shows that simple optimizers using small-size non-derivable concrete rules together with e-graphs and symbolic rules can outperform existing work.

One interesting direction for advancing the optimization algorithm is to use machine-learning techniques to guide the application of symbolic rules. For example, using an LLM to analyze the structure of a circuit and apply symbolic rules in a guided way could potentially improve optimization performance. Additionally, integrating \sys's rewrite-rule engine with resynthesis techniques~\cite{quest, doecode_58510} to further improve optimization is left for future work.
\section{Related Work}
\paragraph{Synthesis of Quantum Optimizations}
Several works have explored the synthesis of quantum circuit optimizations. Queso~\cite{queso} is a quantum circuit optimization rule synthesizer that generates rewrite rules and checks equivalence using efficient polynomial testing, but it does not perform SMT-based checking, which can lead to low-probability false positives. Queso only supports symbolic circuits with monomial gates, which represent only a small and finite fraction of the solution space captured by symbolic rules. Quartz~\cite{quartz} is another quantum circuit optimization rule synthesizer that generates rewrite rules by enumerating all possible circuits within given size and qubit bounds, and it supports symbolic parameters such as angles. \sys follows Queso's approach to group circuits and uses equality saturation for concrete-rule synthesis, resulting in a much smaller rule set that is validated by an SMT solver while ensuring $(n,q)$-completeness. In addition, unlike these two prior works, \sys supports synthesis of symbolic rules whose symbolic variables represent infinite interpretations of subcircuits, and introduces a property-grouping method that enables the generation of such expressive symbolic rules.

\paragraph{Quantum Circuit Optimizers}
Most quantum-circuit compilers use a fixed set of hand-crafted optimizations~\cite{qiskit,AmyGheorghiu2020_staq,CampbellEtAl2023_Superstaq,VOQC, nam2018automated, tket,Certiq}. PyZX~\cite{KissingerVandeWetering2020_PyZX} represents a circuit as a ZX-diagram and applies the graphical rewrite rules of ZX calculus.
Many works have developed verified compilers for quantum circuits; for example, VOQC~\cite{VOQC} is a formally verified quantum circuit optimizer built on top of Coq. Giallar~\cite{Giallar} utilized an automated verification approach based on SMT solvers to verify Qiskit optimization passes. Quarl~\cite{quarl} utilizes reinforcement learning to schedule the application of rules generated by Quartz. Some other works have used resynthesis techniques to optimize quantum circuits~\cite{doecode_58510, quest}. The approach partitions circuits into several subcircuits and resynthesizes each subcircuit's unitary matrix to get a new optimized subcircuit. Many other optimization works~\cite{churchill2017sound, taso, souper_pruning, minotaur} adopt superoptimization, finding the optimal solution for small programs that are later used in peephole optimizers. \sys utilizes a symbolic unitary matrix to represent the constraint of a symbolic subcircuit in a rewrite rule.

\paragraph{Equality Saturation}
With recent progress in fast, efficient equality-saturation engines~\cite{egglog, egg}, many works have used equality saturation~\cite{eqsat} to optimize programs in various domains, including classical programs~\cite{egglog, egg}, math libraries~\cite{herbie}, and tensor programs~\cite{tensat}. \sys's concrete rule inference is most related to general equality-saturation-based rule-inference engines such as Ruler~\cite{ruler} and its successor Enumo~\cite{enumo}. \sys integrates a domain-specific polynomial identity filter~\cite{queso} to group quantum circuits and employs circuit-pruning techniques to reduce the search space. Quasar~\cite{quasar} is concurrent work that runs sequential and graph-level equality saturation in parallel on quantum circuits and uses equality saturation to filter concrete rules from prior work. \sys uses equality saturation to synthesize concrete rules bottom-up for any given gate set $G$ and synthesizes symbolic rewrite rules.
% One important improvement over Ruler~\cite{ruler} is that the e-graph does not add all terms at once; instead, \sys uses efficient probabilistic equivalence checking for quantum circuits to group terms into equivalence classes before e-graph construction. Because there are too many quantum circuits even at gate size 6, adding all enumerated terms to the e-graph at once would create too many e-nodes and e-classes, even after merging equal terms derived by learned rules. For example, there are more than 10k e-classes after pruning when we enumerate terms within 3 qubits and size 6 for the IBM gate set. We realized that we do not need e-graphs for term grouping and instead initialize a new e-graph for each equivalence class. After grouping, each e-graph only needs to add terms from one equivalence class at a time, which reduces construction overhead. Existing general approaches add all terms to the graph at once, leaving a large number of e-classes even after merging, which does not scale to an enormous number of quantum circuits.

\section{Acknowledgments}
We thank the anonymous reviewers for valuable feedback that helped improve
this paper. This work was supported in part by an NSF CAREER Award CCF-2239484, an Amazon Research Award, a VMware Systems Research Award, and an NSF grant CCF-2124080. Ronghui Gu is a co-founder
of and has an equity interest in CertiK.

\section{Data-Availability Statement}
The source code, benchmarks, and scripts are open-sourced at \url{https://github.com/VeriGu/QSymb}. The artifact is also archived on Zenodo~\cite{qsymb-artifact}.

\bibliographystyle{ACM-Reference-Format}
\bibliography{ref}
\end{document}

%% file: macro.tex
\newcommand{\sys}{\textsc{Qsymb}\xspace}

\newcommand{\weiqiang}[1]  {\textcolor{orange}{(WQ: #1)}}
\newcommand{\ganxiang}[1]  {\textcolor{violet}{(GX: #1)}}
\definecolor{S}{HTML}{66CC00}